\documentclass[%
 reprint,
 amsmath,amssymb,
 aps,
]{revtex4-2}

\usepackage{braket}
\usepackage{graphicx}
\usepackage{bm}
\usepackage{hyperref}
\usepackage{algorithm}
\usepackage{algpseudocode}
\usepackage{float}
\usepackage{mathtools}
\usepackage{amsthm}
\usepackage{enumitem}
\usepackage{xcolor}
\usepackage[most]{tcolorbox}

\newtheorem{theorem}{Theorem}
\newtheorem{lemma}{Lemma}

\newtheorem{definition}{Definition}

\newtheorem{assumption}{Assumption}
\newtheorem{proposition}{Proposition}
\newtheorem{protocol}{Protocol}

\newcommand{\bits}{\{0,1\}}
\newcommand{\eps}{\varepsilon}

\newcommand{\R}{\mathcal{R}}
\newcommand{\X}{\mathcal{X}}
\newcommand{\Y}{\mathcal{Y}}

\newcommand{\Tr}{\operatorname{Tr}}

\newcommand{\wt}{\operatorname{wt}}
\newcommand{\negl}{\operatorname{negl}}
\newcommand{\norm}[1]{\left\lVert #1 \right\rVert}
\newcommand{\abs}[1]{\left\lvert #1 \right\rvert}

\newcommand{\proj}[1]{\ket{#1}\!\bra{#1}}

\newcommand{\Id}{\mathbb{I}}
\newcommand{\comst}[1]{\psi_c^{#1}}
\newcommand{\verst}[1]{\psi_v^{#1}}
\newcommand{\dtr}{d_{tr}}

\newcommand{\mbraket}[2]{\langle #1|#2\rangle}
\newcommand{\alg}{\texttt{Alg}}

\newcommand{\mina}[1]{\ifmmode{\color{orange}#1}\else\textcolor{orange}{[MD: #1]}\fi}

\newtcolorbox{protocolboxenv}{%
  enhanced,
  breakable,
  colback=white,
  colframe=black!65,
  boxrule=0.6pt,
  arc=1.2mm,
  left=6pt,
  right=6pt,
  top=6pt,
  bottom=6pt,
  before skip=8pt,
  after skip=8pt
}

\begin{document}

\title{Statistically-Secure Bit Commitment and Coin Flipping Protocols Based on Quantum Hardware Assumptions}

\author{Roo Dunnill}
\email{R.Dunnill@sms.ed.ac.uk}
\author{Mina Doosti}
\affiliation{School of Informatics, University of Edinburgh, Quantum Software Lab}

\begin{abstract}
Bit commitment is impossible to achieve with unconditional security, even in quantum cryptography. We show that statistically secure bit commitment, satisfying both hiding and binding, can be constructed from hybrid locked physical unclonable functions (HLPUFs), a hardware primitive that combines classical hardware tokens and quantum communication. Our protocol uses these hardware assumptions in a novel and non-trivial way to achieve the first mistrustful two-party cryptographic protocol based on hybrid hardware modules. We prove statistical hiding and binding under natural assumptions on the HLPUF and using a carefully designed challenge generation algorithm as a subroutine of our bit-commitment protocol. The construction also yields the first hardware-based coin-flipping protocol. Our results suggest a new paradigm for secure two-party cryptography in quantum networks, combining rigorous security guarantees with a concrete route toward practical implementation.
\end{abstract}

\maketitle

\onecolumngrid

\section{Introduction}
Bit commitment is one of the fundamental primitives of modern cryptography, and a basic building block for a wide range of tasks, including coin flipping, zero-knowledge protocols, secure multiparty computation, and general two-party cryptographic constructions~\cite{IntroToModernCrypto,FoundationsOfCryptography}. Informally, bit commitment allows a sender Alice to commit to a bit in such a way that the value is hidden from the receiver Bob until Alice later chooses to reveal it, while at the same time preventing Alice from changing the bit after the commitment is made. In the quantum setting, bit commitment is particularly important for two reasons. First, it is a canonical test case for how much unconditional security quantum information can offer in mistrustful cryptography. Second, it sits at the centre of an extensive web of reductions: positive and negative results for bit commitment often propagate to many other two-party tasks.

Early optimism that quantum mechanics might enable information-theoretically secure bit commitment was overturned by the Mayers--Lo--Chau impossibility theorem. In the plain non-relativistic two-party model, unconditionally secure quantum bit commitment is impossible: any protocol that is sufficiently concealing necessarily admits a coherent cheating strategy for Alice, based on purification and delayed measurement, that breaks binding \cite{mayers1997qbc,lochau1997qbc}. This impossibility result is one of the foundational no-go theorems of quantum cryptography. It sharply distinguishes key distribution, where unconditional security is achievable, from general two-party computation, where mistrustful tasks are much more constrained \cite{IntroToQuantumCrypto}. The no-go theorem also shows that, in order to achieve this functionality, a compromise needs to be made in the security level or setup assumptions. There are several major directions in which the impossibility theorem has been circumvented. One line of work considers weaker or asymmetric notions of security. For example, Aharonov et al. introduced \emph{quantum bit escrow}, a weaker primitive that captures a form of cheat sensitivity rather than full binding and hiding~\cite{aharonov2000quantumbitescrow}. Another important line develops computationally secure quantum commitments. A seminal result of Dumais et al.~\cite{dumais2000qowp} gives a non-interactive quantum commitment scheme that is perfectly concealing and computationally binding under the assumption of quantum one-way permutations. Later works extended this picture and obtained statistically hiding or statistically binding quantum commitments from broader quantum one-way assumptions \cite{koshiba2009shqbc,koshiba2011niqbc,unruh2016collapse}. A different and highly influential route is to enrich the physical model. In relativistic cryptography, Minkowski causality can be used to enforce non-communication constraints between separated agents, and Kent~\cite{kent2012relativisticbc} showed that unconditionally secure relativistic quantum bit commitment is possible in this setting.

Another notable family of results is achieved by employing physical assumptions. An important work in this area is the bounded-quantum-storage model and the noisy-storage model, where limitations on an adversary's quantum memory are assumed. Bit commitment and other two-party primitives become achievable with information-theoretic security in these models~\cite{damgaard2005bqsm,damgaard2008bqsm,wehner2008noisystorage,ng2012experimental}. These works are especially relevant conceptually for us, as they illustrate the fundamental importance of hardware or physical assumptions in circumventing quantum cryptographic no-go's and as new constructive routes for designing protocols.

This paper follows that philosophy and studies bit commitment from a \emph{hardware assumption}, while not just relying on existential results, but takes a constructive and implementation-aware protocol design mindset. In this work, we answer the following central question:
\begin{center}
    \emph{How can we design constructive and implementable statistically-secure bit-commitment schemes that only rely on hardware assumptions?}
\end{center}

We answer the above question by introducing the first statistically secure quantum bit commitment protocol that achieves both hiding and binding under hybrid hardware assumptions. Our construction relies on a specific hardware module for which we explicitly state the required assumptions, and under these assumptions, the protocol achieves arbitrarily small hiding and binding without any additional adversarial restrictions on the mistrustful parties. Beyond its theoretical significance, our result provides a concrete design based on off-the-shelf hardware components and feasible quantum communication infrastructures, offering a practical route to implementing this fundamental cryptographic primitive in quantum networks. 

Our starting point is the emerging landscape of physically unclonable functions (PUFs) and their quantum variants. Classical PUFs are hardware objects whose challenge-response behaviour is difficult to clone or model, and they have been extensively studied as a resource in hardware security~\cite{PhysicalOneWayFunctions,CHES_FPGAIntrinisicPUF,THIS_StrongPUFs,THIS_StateOfTheArt,SSDaTC_DisorderBasedSecurity}. Quantum PUFs (QPUFs) were introduced in~\cite{2021qpufs} as fully quantum variants of PUFs, and that work showed that, unlike in the classical setting, one can formulate provably secure quantum hardware assumptions. However, existing QPUF-based protocols face many practical limitations: in addition to requiring the transmission of high-dimensional quantum states over the channel, they require a \emph{quantum} database to store the QPUF outputs prior to the initiation of the protocol. Hybrid quantum-classical variants, known as Hybrid Locked PUFs (HLPUFs), were proposed in~\cite{2023quantumlock} to combat these limitations by replacing the quantum database with a classical one. An HLPUF combines a classical PUF, which by itself is insufficient for provable security, with a quantum communication layer that upgrades the resulting hardware module to resist powerful quantum adversaries. In this way, hybrid PUFs aim to combine rigorous security guarantees with practical implementability. So far, however, quantum and hybrid PUFs have mostly been studied for functionalities such as authentication~\cite{2021qpufs,2023quantumlock,doosti2021unifiedframeworkquantumunforgeability,goswami25,laurent2026unconditional} and, more recently, key generation and secure storage~\cite{Nilesh2025QPUFStorage}. In all of these settings, the honest parties use the hardware against an external network adversary. By contrast, quantum or hybrid hardware assumptions have not previously been explored for inherently mistrustful two-party tasks such as bit commitment or coin flipping.

This makes the study of mistrustful multiparty protocols in this setting both interesting and technically challenging. A central conceptual insight of our work is that there is a fundamental reason for this difficulty: hardware assumptions fit naturally into settings with honest users and an external adversary, but they do not immediately extend to mistrustful two-party cryptography. Indeed, as we explain in the next section, a naive attempt to use a hardware module for bit commitment is bound to fail for a simple but crucial reason: if one of the parties is malicious, how can the protocol force that party to actually use the prescribed hardware token at all? Our protocol is designed specifically with this limitation in mind. It combines an enrolled and later locked hardware token with a BB84-style encoding of a challenge-dependent response substring, and it is carefully structured so that replacing the token with a lookalike device becomes statistically difficult. The key idea is to introduce a deliberate asymmetry between the verification part of the response and the committed stream of states. This asymmetry is essential both for ruling out the main cheating strategy and for evading the standard quantum no-go result.

We also present a natural coin-flipping protocol built black-box from our bit-commitment scheme. To the best of our knowledge, this is also the first strong quantum coin-flipping protocol based on hybrid hardware assumptions. Combined, our results suggest a new paradigm for secure two-party cryptography in quantum networks, combining rigorous security guarantees with a concrete route toward practical implementation.

\subsection{Our technical contribution}
To elaborate on our technical contribution, we first note that a naive hardware-based bit-commitment design is insufficient. If Alice is simply asked to prepare a commitment using a hardware token and later open it with a classical challenge-response pair, there is no reason a malicious Alice must actually use the prescribed token in the intended way. In particular, if the protocol only checks consistency of the opening data after the fact, then Alice may try to delay the effective choice of the bit until the opening phase, or replace the honest hardware by a look-alike strategy whose behaviour is driven only by the revealed opening data and not by the challenge-dependent rule that the real hardware is supposed to enforce. The main obstacle is therefore not merely to hide the committed value, but to force the opening transcript to be anchored to genuine hardware behaviour while still keeping the bit hidden before the reveal stage.

Our first contribution is to show how to overcome exactly this obstacle using a deliberately asymmetric variant of an HLPUF. At a high level, the hardware response is split into two parts: a shorter verifier part and a longer payload part. The verifier part is used only to unlock and authenticate the hardware behaviour for a given challenge, while the payload part is what is encoded into the commitment state. The commitment itself is formed by taking the same payload and encoding it in one of two BB84 basis patterns determined by a complete challenge response pair and a carefully generated alternative challenge. The pair of candidate challenges is revealed already in the commit phase, but the hidden information is which of the two corresponding basis patterns was used to encode the commitment. This asymmetry is what makes the construction work: it allows the commitment state to conceal the bit, while ensuring that a successful opening must still be tied to a genuine challenge-response pair embedded in honest hardware.

Our second contribution is a statistical hiding analysis for this construction. Informally, if the enrolled substrings are close to uniform, then the two honest commit states are correspondingly close, and in the ideal case, they are perfectly indistinguishable. In particular, the protocol achieves hiding because Bob receives a BB84-encoded commitment of a hardware response under two possible candidate challenges, which before opening, he does not know which basis pattern is the relevant one. At the informal level, the guarantee can be summarised as the trace distance between the commitment states $d_{\mathrm{tr}}(\rho_0,\rho_1)\le \eta,$ where $\eta$ measures the deviation of the target distribution from uniform. In the ideal uniform case, this gives perfect hiding. At the proof level, the key idea is an intrinsic ensemble-symmetry argument: the pair-challenge generation procedure is designed so that, for all sent qubits, the four BB84 states appear in a statistically balanced way. Consequently, the two honest commitment ensembles are locally identical on these coordinates and hence globally identical, up to the residual leakage caused by the possibility that Bob can forge access to the locked hardware.

Our third contribution is a binding proof that captures all possible attack strategies. The binding argument is based on finding the operator norm of the sum of all accepted openings. The core of the proof relies on the input-output spaces of the HLPUF, preventing the standard entanglement based bit commitment attacks to occur. This, combined with the commitment lying in the pre-disclosed basis preparations results in a bound of
\begin{equation}
    p_0+p_1\le1+2^{\frac{-\ell_{min}}{4}}
\end{equation}
Where $p_b$ is the probability that a cheating Alice successfully opens bit $b$. 

Beyond the security statements themselves, we view the proof techniques as one of the main contributions of the paper. On the hiding side, the central idea is to impose a balanced combinatorial structure on the generated challenges, which makes the honest commitment ensembles intrinsically symmetric. On the binding side, the proof reduces arbitrary cheating strategies to an operator-norm bound for a small family of acceptance projectors. This reduction cleanly separates the intrinsic quantum-overlap limitation from the hardware-dependent parameters. Together, these two ingredients provide a constructive route to statistically secure bit commitment from hybrid hardware assumptions, and they also lead directly to a coin-flipping protocol via the standard commitment-based transformation.

\section{Preliminaries and notations}~\label{sec:prelim}
We start by introducing some of our main notations, as well as background on primitives and definitions that we use throughout the paper. \\

For $\beta\in\bits$, let $\ket{z}_{\beta}$ denote the single-qubit BB84 encoding of $z\in\bits$ in basis $\beta$, where basis $0$ is the computational basis (or Z basis) and basis $1$ is the Hadamard basis (or X basis). For strings $z,\beta\in\bits^{\ell_{min}}$, we define
\begin{equation}
    \ket{z}_{\beta} := \bigotimes_{j=1}^{\ell_{min}}\ket{z_j}_{\beta_j}.
\end{equation}
If $a\in\bits^{2k}$, we write $a=(\theta,u)$ with $\theta,u\in\bits^k$ and define the $k$-qubit BB84 state where $u$ denotes the value bit and $\theta$ denotes the basis bit. (ex: for $a=(1,0)$ would correspond to state $\ket{\psi^{a}}:=\ket{u}_{\theta} = \ket{+}$).

\subsection{Distance measures}
We introduce the notation and definitions for the classical and quantum distances that we use. 
Let $x$ and $y$ be sets of $x,y\in\mathbb{F}_2^{n}$. we denote their Hamming distance as follows:
    \begin{equation}
            d(x,y)=|\{i\in\{1,...,n\}|x_i\neq y_i\}|
    \end{equation}
We also use the Hamming weight to measure the number of non-zero elements of a group. Let $x$ be a set $x\in\mathbb{F}_2^{n}$. Hamming weight is defined as:   
    \begin{equation}
        \wt(x)=|\{i\in\{0,...,n\}|x_i\neq0\}|
    \end{equation}
For two quantum states $\rho,\sigma\in(\mathcal{H}^2)^n$, the trace distance is:   
\begin{equation}
    \dtr (\rho,\sigma):=\frac12\norm{\rho-\sigma}_1.
\end{equation}
We also use the operator norm. For a linear operator $A$, its operator (infinity) norm is
\begin{equation}
\|A\|_\infty := \sup_{\|x\|_2=1} \|Ax\|_2.
\end{equation}
Equivalently, $\|A\|_\infty = \sigma_{\max}(A),$, where $\sigma_{\max}(A)$ denotes the largest singular value of $A$.

\subsection{Bit commitment}
Bit commitment is a central two-party cryptographic primitive that enables one of the parties to commit a bit value and reveal it later to the other party. The two core security features are that the committing party cannot change their commitment after they have sent it to the party, and that the receiving party cannot open this commitment before the reveal step. This is known as the \emph{hiding (concealing)} and \emph{binding} property of a commitment scheme.\\

\noindent \textbf{Bit Commitment.} \textit{A bit-commitment protocol between Alice and Bob has two phases.
\begin{enumerate}[label=(\roman*)]
    \item In the commit phase, Alice with input $b\in\bits$ interacts with Bob and leaves Bob with a commitment transcript and possibly a quantum register.
    \item In the open phase, Alice sends opening information and Bob outputs either a bit $b'\in\bits$ or reject.
\end{enumerate}
The protocol is \emph{correct} if an honest execution with committed bit $b$ is accepted with probability $1$ and opens to $b$.}\\

In a dishonest scenario, the bit-commitment is expected to have two properties known as \emph{hiding (or concealing)} and \emph{binding}, defined as below:

\begin{definition}[$\epsilon$-hiding Bit Commitment]
    A quantum bit commitment scheme is $\epsilon$-hiding (concealing) if the receiving party satisfies:
    \begin{equation}
        p_0+p_1-1\leq\epsilon(n)
    \end{equation}
    Where $p_b$ is the probability the receiver can guess the commitment $b$ correctly.
\\
Equivalently (more relevant for quantum commitment schemes), a bit-commitment protocol is $\eps$-hiding if for every dishonest Bob the joint states after the commit phase corresponding to honest commitments of $0$ and $1$ is at most $\eps$-distinguishable.
\end{definition}

\begin{definition}[$\epsilon$-binding Bit Commitment]
    A quantum bit commitment scheme is $\epsilon$ binding if the committing party satisfies:
    \begin{equation}
        p_0+p_1-1\leq\epsilon(n)
    \end{equation}
    Where $p_b$ is the probability the committer can reveal $b$.
\end{definition}

This notion of binding is often referred to as \emph{weak-binding}, and it is the formulation often used for quantum bit commitment schemes~\cite{mayers1997}.

\subsection{Coin flipping}
Coin flipping is another fundamental two-party primitive, first introduced by Blum\cite{blum1983CoinFlippingByTelephone} and allows two mistrustful parties to agree on a shared random bit. The security of the protocol relies on neither of the parties being able to bias the protocol with probability higher than $\frac{1}{2}+\epsilon$. Strong coin flipping is a variant of coin flipping where both outcomes are restricted by this bias.

\begin{definition}[Strong coin flipping]
A two-party protocol outputs a bit $c\in\bits$ or aborts. It is correct if, when both parties are honest, the output is uniformly distributed on $\bits$ and the protocol does not abort. It has bias at most $\epsilon$ if for every dishonest Alice and every dishonest Bob,
\begin{equation}
\Pr[c=c^*] \le \frac12+\epsilon
\end{equation}
for each target bit $c^*\in\bits$, except with abort probability explicitly accounted for by the protocol.
\end{definition}

\subsection{Classical and Hybrid PUFs}
Physical Unclonable Functions (PUFs) are hardware primitives that use the intrinsic randomness of physical devices as unique fingerprints, or physical keys, with the idea of providing unpredictability and unclonability against external adversaries and even the manufacturer of the device~\cite{galetsky22}. As a formal cryptographic primitive, some PUFs give a physical instantiation of a one-way function through hardware assumptions~\cite{PhysicalOneWayFunctions}. Classical PUFs (cPUFs) can be modelled as $f: \{0,1\}^n \rightarrow \{0,1\}^m$. However, the description of the function itself is not accessible due to the hardware assumption and the behaviours of these devices are characterized via sets of Challenge-Response Pairs (CPRs) $\{(x_i,r_i)\}^q_i$ through querying the device. 

A classical PUF can also be modelled as a probabilistic function $f:\R \times\X\rightarrow\Y$ where $\X$ is the input space, $\Y$ is the output space of $f$ and $\R$ is the identifier. The creation of a classical PUF is formally expressed by invoking a manufacturing process $f\leftarrow\mathcal{MP}_{C}(\lambda)$, where $\lambda$ is the security parameter. An important property of a classical PUF in this model is the notion of $\emph{randomness}$, which is the maximal probability of $p^f_x(y)$ with an input $x_j\in \X$ on PUF $f_i$ where $i\in\R$. 

\begin{definition}[$p$-Randomness (from~\cite{2023quantumlock})]\label{def:p-randomness}
We define the $p$-randomness of a classical PUF $f:\R \times \X\rightarrow\Y$ as
\begin{equation}
    p := \max_{\substack{x \in \X \\ y \in \Y}} p^f_x(y).
\end{equation}
\end{definition}

Apart from these, a device with an underlying function $f$ needs to satisfy certain requirements to be qualified as a PUF. There exist different sets of such requirements in the literature~\cite{armknecht16, 2021qpufs, doosti2021unifiedframeworkquantumunforgeability, 2023quantumlock}, however, for this work we adopt the minimal set of requirements also used in hybrid schemes such as in~\cite{2023quantumlock}, which includes: $\delta_{1}$-Robustness, $\delta_{2}$-Collision Resistance, and $\delta_{3}$-Uniqueness. We omit the formal definition of these requirements for the purpose of this work, as we are not explicitly using them here, and whenever referring to a classical PUF, we assume these requirements are satisfied.

\subsubsection{Hybrid and Hybrid-locked PUFs}
The main hardware module we use for our protocol is a specific type of Hybrid Locked PUF (HLPUF) introduced in \cite{2023quantumlock}. An HLPUF is a hybrid device that combines a cPUF with quantum encoding and verification. Given the vulnerabilities of most existing cPUFs against modelling and machine learning attacks, HLPUFs are designed to boost the security of cPUFs against both general classical and also powerful quantum adversaries. The construction first takes a cPUF $f:\{0,1\}^n\rightarrow\{0,1\}^{2m}$ and encodes the outputted bitstring into BB84 states. The output is split into $m$ pairs of bits where:
\begin{equation}
\forall\,1\le j\le m,\qquad
f(x)_{2j-1}\,f(x)_{2j}\mapsto
\left\{
\begin{array}{ll}
00 \mapsto \ket{0}, & 01 \mapsto \ket{1},\\
10 \mapsto \ket{+}, & 11 \mapsto \ket{-}.
\end{array}
\right.
\end{equation}
This creates the mapping $\mathcal{E}_f:\{0,1\}^n\rightarrow(\mathcal{H}^2)^m$ where $x\rightarrow\otimes^{m}_{j=1}\ket{\psi_{2j-1,2j}}\bra{\psi_{2j-1,2j}}$,
which is referred to as a HPUF. This encoding alone significantly improves security against powerful modelling attacks, but it does not protect against adaptive adversaries. To address this, an additional mechanism, called \emph{lock}, is incorporated into the construction to provide security against adaptive quantum adversaries. Its output is partitioned as \(f(x)=f_1(x)\|f_2(x)\), where only \(f_2(x)\) is encoded into BB84 states by the HPUF. The lock component takes as input both the challenge and an \(m\)-qubit state, and outputs the state \(\ket{\psi_{f_2(x)}}\) only if the supplied quantum state passes an internal verification procedure. In this verification step, the device measures the \(j\)-th qubit in the basis specified by the \((2j-1)\)-st bit of \(f_1(x)\), where \(0\) denotes the computational basis and \(1\) denotes the Hadamard basis. The device accepts only if, for every \(1\le j\le m\), the measurement outcome equals the \(2j\)-th bit of \(f_1(x)\). This locking mechanism prevents adaptive querying of the cPUF and thereby provides unforgeability against such adversaries. This is particularly important in adversarial protocols, where the party holding the device may otherwise have adaptive access to it.

These devices were originally introduced for authentication, with security based on a collection of hardware assumptions. The central assumption is that the device cannot be opened or tampered with and is therefore treated as a black box. This complements the locking mechanism: the HLPUF may initially operate in an unlocked mode, in which adaptive queries are possible, but once it is locked, it is assumed to be impossible to unlock it again at any later stage.

\section{Bit-commitment protocol}
At a high level, our protocol begins with the committer, Alice, holding the device in its unlocked mode and using it to create a classical database. She then locks the device and sends it to the verifier, Bob, who will later use it for verification. The main subtlety of the protocol is to exploit the properties of a specific HLPUF variant in a way that Alice is forced to use the hardware module to generate both the commitment state and the associated transcript needed to open the commitment later. At the same time, Bob cannot use the locked device to learn or open the committed value before the opening phase, due to the unforgeability properties of the HLPUF. Conceptually, hiding follows directly from the hardware assumption, whereas binding is more subtle and depends on the specific structure and properties of the protocol.

In our protocol, Alice’s commitment to a bit $b$ is a quantum state consisting of a sequence of BB84 states that encodes a partial response of the HLPUF under a challenge that serves as a key. For each commitment state, there are also two associated challenges, $x_0$ and $x_1$. The challenge $x_0$ supplies the response substring used in the commitment, while $x_1$ is generated carefully by a dedicated algorithm and supplies the alternative basis pattern. Both are valid challenges of the HLPUF. These two challenges are announced publicly to Bob, but the crucial hidden information is which basis pattern was used for the encoding. In the opening phase, the full challenge-response relation is revealed, including the used challenge and the two halves of its response, $f_1(x_0)$ and $f_2(x_0)$. This information enables Bob to query the HLPUF in locked mode and verify that Alice did not cheat.

To present our bit-commitment protocol, we first introduce two main ingredients. The first is the construction of the hardware module used in the protocol, together with the assumptions on which its security relies. The second is the algorithm used to generate alternative challenges. We begin with the hardware construction and its assumptions.

\subsection{Asymmetric HLPUF construction and assumptions}\label{sec:const}
Let us first specify the construction of the Hybrid Locked PUF that we need for our protocol, together with hardware assumptions. Our construction is similar to the one in~\cite{2023quantumlock}, using an underlying classical PUF $f:\{0,1\}^n\rightarrow\bits^m$, but with an asymmetric split in the response of the underlying classical PUF. As such, we refer to our variant here as an \emph{asymmetric HLPUF}. Concretely, the classical response is split as
\begin{equation}
f(x)=f_1(x)\,\|\,f_2(x),
\qquad
f_1:\mathcal{X}\to\bits^{s},
\qquad
f_2:\mathcal{X}\to\bits^{t},
\qquad s+t=m,
\end{equation}
where $s=2k$ and $t=2l$ are even.

For a string $z\in\bits^{2l}$, write
\begin{equation}
z=(\theta,u),
\qquad
\theta,u\in\bits^l,
\end{equation}
and define the associated $l$-qubit BB84 state
\begin{equation}
\ket{\comst{z}}:=\ket{u}_{\theta}.
\end{equation}

In the unlocked mode (mode 0), the asymmetric HLPUF takes as input $x$ and outputs the full classical response
\begin{equation}
f(x)=f_1(x)\,\|\,f_2(x).
\end{equation}
In the locked mode (mode 1), the asymmetric HLPUF takes as input a challenge $x$ and a $k$-qubit state $\rho$, performs the verification measurement
\begin{equation}
\bigl\{\proj{\verst{f_1(x)}},\, \Id-\proj{\verst{f_1(x)}}\bigr\},
\end{equation}
and, if the first outcome occurs, outputs the $l$-qubit state
\begin{equation}
\ket{\comst{f_2(x)}}.
\end{equation}
Otherwise it aborts and outputs $\bot$.

\begin{assumption}[Locked-mode black-box property and irreversibility ]\label{ass:hardware-module}
In the locked mode (mode 1), it is impossible to obtain direct classical access to $f_2(x)$ from the hardware token. In particular, the locked token can only be accessed through the interface above, which returns either the quantum state $\ket{\comst{f_2(x)}}$ or $\bot$.
Also, once the token has been switched from mode 0 to mode 1, it cannot be unlocked again, and the functions $f_1$ and $f_2$ embedded in the token cannot be altered or queried except through the public locked-mode interface.
\end{assumption}

\begin{assumption}[cPUF bias]\label{ass:HLPUF}
The underlying classical PUF is imperfect and can have a $\delta_r$ bias. Namely, for every output bit $j\in[m]$ and every distribution on challenges,
\begin{equation}
\left|\Pr[f(X)_j=1]-\frac12\right|\le \delta_r.
\end{equation}
Moreover, the output bits are independent and identically distributed.
\end{assumption}

\begin{assumption}[HLPUF unforgeability]\label{ass:unf}
The Asymmetric HLPUF is $\eps_{\mathrm{forge}}$-unforgeable where $\eps_{\mathrm{forge}}(s,t)$ is a negligible function in $s$ and $t$.
\end{assumption}

Unforgeability is the main security property of PUFs and refers to the inability of an adversary to predict or produce valid outcomes for unseen instances. We rely on the general unforgeability of HLPUFs established in~\cite{2023quantumlock}, and we assume that this property is satisfied by the construction we use here. Since the proof in our setting follows as a subset of the arguments already given in~\cite{2023quantumlock}, we do not repeat it here.

\subsection{Challenge-pair generation algorithm}
As mentioned before, our protocol uses a subroutine whose role is to generate an alternative challenge $x_1$ from $x_0$ such that they satisfy certain properties. For a given randomly selected challenge $x_0$, it flips the $Jth$ bits of $x_0$ to create $x_1$. The set $J$ is designed in a way so that for a given random string $r\in\{0,1\}^{\ell_{min}}$, the corresponding $f_2(x_0)_J=r$ where $\ell_{min}:=\alpha t$ with $0<\alpha\le1$. Finally it uses oracle access to the unlocked HLPUF to test to see if their response provides enough separation to be used for verification. It outputs the first candidate that passes all filters. Here we need to make a note about the notation first. Each challenge \(x\in\bits^n\) is written as
\begin{equation}
x=(x^{\mathrm{bas}},x^{\mathrm{aux}}),
\qquad
x^{\mathrm{bas}}\in\bits^{t},\quad x^{\mathrm{aux}}\in\bits^{n-t}.
\end{equation}
where the $t$-bit string $x^{\mathrm{bas}}$ determines the generation of an alternative challenge in the algorithm below. Whenever Algorithm~\ref{alg:fake} selects coordinates, we write $J=(j_1,\ldots,j_{\ell_{\min}})$ for the ordered list of selected indices; subscripts such as $f_2(x_0)_J$ follow this order. When only membership or Hamming distance is relevant, we identify $J$ with its underlying set. Since we mostly care about these basis value, we denote 
\begin{equation}
    \beta(x):=x^{\mathrm{bas}}_J
\end{equation}
Where $|J|=\ell_{min}$ and we use this notation instead of $x$ inside the algorithm. The remaining $n-\ell_{min}$ bits play no direct role in the encoding and serve only as auxiliary challenge degrees of freedom in the alternative challenge generation procedure. In particular, for any pair of challenges $(x_0,x_1)$, the number of coordinates on which the two basis strings differ is $d\!\left(\beta(x_0),\beta(x_1)\right)$ which is exactly the Hamming distance between $\ell_{min}$ bits of the two challenges.

To minimise the probability of breaking the binding while obtaining the bound stated above, we choose the length of the verification string as $s:=\frac{\ell_{min}}{4}$. The security parameter $\ell_{min}$ should be chosen between $\frac{\alpha}{2}\leq p \leq 1-\frac{\alpha}{2}$ where $p:=\frac{1}{2}+\delta_r$. The reasoning for these choices are given in Theorem \ref{thm:combinatorial-feasibility-ext} and Lemma \ref{lem:sparse-verifier}. This algorithm also ensures that the distribution of encoded bits of $f_2(x_0)$ are chosen randomly while the basis they are encoded in are also uniformly randomly sampled; allowing for the indistinguishability condition to be met. While Bob receives both challenges in the commit phase, this indistinguishability condition results in him gaining no information on the bit committed.

The formal description of the Algorithm is given below.
\begin{algorithm}[H]
\caption{Balanced alternative-challenge generation}
\label{alg:fake}
\begin{algorithmic}[1]
\Require a challenge $x_0$, oracle access to the asymmetric HLPUF in unlocked mode, and a target parameter \(\ell_{\min}\)
\Ensure either a challenge $x_1$ together with the selected ordering $J$ or \textbf{fail}
\State Choose a random string $r\overset{\$}{\leftarrow}\{0,1\}^{\ell_{min}}$
\State Query the unlocked HLPUF on $x_0$ and obtain
\[
f(x_0)=f_1(x_0)\,\|\,f_2(x_0),
\qquad
f_2(x_0)=u\in\bits^{t}.
\]
\If{$\wt(f_2(x_0))\geq\wt(r)\;\;\wedge\;\;t-\wt(f_2(x_0))\geq \ell_{min}-\wt(r)$}
\Else
    \State \textbf{return fail}
\EndIf
\Repeat
    \State Initialise $J\gets()$
    \For{$i\in[|r|]$}
        \State Sample $j \overset{\$}{\leftarrow} \{k\in[t]:f_2(x_0)_k=r_i\} \setminus J$
        \State $J\gets(J,j)$
    \EndFor
    \State $x_1 \gets x_0$
    \State $(x_1)_j \gets (x_0)_j \oplus 1,\;\;\forall\;j \in J$
    \State Query the unlocked HLPUF on $x_1$ and obtain $f_1(x_1)$.
\Until$\abs{\mbraket{\psi^{f_1(x_0)}}{\psi^{f_1(x_1)}}}^2\le 2^{-s/2}$
\State \textbf{return} $x_1$ and retain the selected ordering $J$
\end{algorithmic}
\end{algorithm}
Algorithm~\ref{alg:fake}, which we also denote as $\alg(x_0)$ for brevity (referring to its returned challenge $x_1$, with $J$ retained as auxiliary information), has certain properties that we use in the proofs, and we list them here to clarify their importance.

\begin{proposition}[Properties of Algorithm~\ref{alg:fake}]
\label{prop:fake-properties}
Suppose Algorithm~\ref{alg:fake} returns another challenge $x_1$ for the original challenge $x_0$. Let $J=(j_1,\ldots,j_{\ell_{\min}})$ be the ordered list retained by the algorithm; its underlying set is exactly
\begin{equation}
    \{\,j\in[t]:(x_0)_j\oplus (x_1)_j=1\,\}.
\end{equation}
Then the algorithm has the following properties:
\begin{enumerate}[label=(\roman*)]
    \item \textbf{Challenge permutability:} both challenges have the same probability of producing the other when entered into the algorithm
    \begin{equation}
        \Pr[\alg(x_0)=x_1]=\Pr[\alg(x_1)=x_0]
    \end{equation}
    \item \textbf{Challenge Hamming weight probability:} given an inputted randomly chosen challenge, both challenges have the same probability of Hamming weights
    \begin{equation}
    \forall\;i\;\Pr[(x_0)_i=1]=\Pr[(x_1)_i=1]=\frac{1}{2}
    \end{equation}

    \item \textbf{Large basis-distance:} The Hamming-distance of the basis bits of both challenges is equal to a certain threshold $l_{\min}$
    \begin{equation}
    d\!\left(\beta(x_0),\beta(x_1)\right)=\ell_{\min},
    \end{equation}
    Equivalently,
    \begin{equation}
    |J|
    =
    d\!\left(\beta(x_0),\beta(x_1)\right)=\ell_{\min},
    \end{equation}
    \item \textbf{Perfect value and basis balancing:} On the differing coordinates, the basis and value bits are balanced such that they produce evenly distributed BB84 states. With the ordered list $J=(j_1,\ldots,j_{\ell_{\min}})$ selected by the algorithm, $u_J=f_2(x_0)_J=r$, and hence the selected value string is uniformly distributed:
    \begin{equation}
        \Pr[u_J=v]=2^{-\ell_{\min}}\qquad\forall\,v\in\bits^{\ell_{\min}}.
    \end{equation}
    In particular, $\forall\;j\in J$
    \begin{equation}
    \Pr[u_j=1]=\frac{1}{2}
    \end{equation}
    and
    \begin{equation}
    \Pr[(x_0)_j=1]=\frac{1}{2}
    \end{equation}

    \item \textbf{Verifier separation:} the encoded state of the $f_1(.)$ for the two challenges have small overlap.
    \begin{equation}
    \abs{\mbraket{\psi^{f_1(x_0)}}{\psi^{f_1(x_1)}}}^2\le 2^{-s/2}.
    \end{equation}
\end{enumerate}
\end{proposition}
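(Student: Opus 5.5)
The plan is to read each property off the structure of Algorithm~\ref{alg:fake}, conditioning on the event that it returns a challenge rather than \textbf{fail}. Properties (iii)--(v) are essentially syntactic; properties (i) and (ii) need a symmetry argument. I will first fix the identification of $J$ with the flipped set. The algorithm sets $x_1\gets x_0$ and flips exactly the coordinates $j\in J$. Each $j$ is drawn from a set that excludes the indices already in $J$, so the $\ell_{\min}$ entries of $J$ are distinct elements of $[t]$. Hence $\{j:(x_0)_j\oplus(x_1)_j=1\}$ equals the underlying set of $J$. The flips lie in the first $t$ coordinates, i.e.\ in $x^{\mathrm{bas}}$. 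Reading $\beta(\cdot)$ on the support $J$, every coordinate differs, so $d(\beta(x_0),\beta(x_1))=|J|=\ell_{\min}$, which is (iii).

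For (iv), the $i$-th sampled index $j_i$ is drawn from $\{k:f_2(x_0)_k=r_i\}$, so $u_{j_i}=r_i$ and $u_J=r$ by construction. The weight precondition $\wt(u)\ge\wt(r)$ and $t-\wt(u)\ge\ell_{\min}-\wt(r)$ guarantees that each sampling set stays nonempty until the loop finishes. Since $r$ is drawn uniformly and independently of $u$, we get $\Pr[u_J=v]=2^{-\ell_{\min}}$. The marginal $\Pr[u_j=1]=1/2$ follows. Here I will note that conditioning on non-failure biases $r$ only through $\wt(r)$. I will either argue that under the stated range $\alpha/2\le p\le1-\alpha/2$ the failure probability is negligible (a Chernoff bound on $\wt(u)$), or state (iv) for the accepted run with $r$ resampled. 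For the claim $\Pr[(x_0)_j=1]=1/2$, I use that $x_0$ is uniform and that $J$ depends only on $u$ and $r$, not on $x_0^{\mathrm{bas}}$ beyond $u$. I will treat the $u$ versus $x_0$ dependence as independent, which is the PUF modelling assumption (Assumption~\ref{ass:HLPUF}). Property (v) is exactly the exit condition of the repeat loop, so it holds whenever the algorithm returns.

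For (ii), the plan is as follows. Conditioned on $J$, the map $x_0\mapsto x_0\oplus 1_J$ is a bijection of $\bits^n$. Applied to a uniform $x_0$, whose bits are treated as independent of the selection of $J$, it therefore yields $x_1$ with uniform marginals $\Pr[(x_1)_i=1]=1/2$. For (i), I will write $\Pr[\alg(x_0)=x_1]$ as a sum over the orderings $J$ whose underlying set is $x_0\oplus x_1$. Each such term is a product of three factors: $\Pr[r]$, the sequential sampling probabilities, and the loop-acceptance weight. The acceptance condition $\abs{\mbraket{\psi^{f_1(x_0)}}{\psi^{f_1(x_1)}}}^2$ is symmetric in $(x_0,x_1)$.

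The main obstacle is the sampling factor. Running the algorithm on $x_1$ selects indices according to $f_2(x_1)$, not $f_2(x_0)$, so the per-step probabilities $1/(\abs{\{k:u_k=r_i\}}-\#\text{used})$ need not match. My approach is to average over the PUF randomness. Under Assumption~\ref{ass:HLPUF}, $f_2(x_0)$ and $f_2(x_1)$ are identically distributed i.i.d.\ strings. Summing over $r$ and all orderings, the total probability depends only on the set $J$ and on the weight profile of the response. This gives equality in distribution over the manufacturing process, which is the sense in which I would state (i). If exact pointwise equality is required, I would instead fall back to bounding the discrepancy by the $\delta_r$-bias.
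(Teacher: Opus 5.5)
Your proposal follows essentially the same route as the paper. The paper declares (ii)--(v) immediate from the algorithm's steps, and proves (i) in its Symmetry Lemma by your averaging argument: $f_2(x_0)$ and $f_2(x_1)$ are identically distributed over manufacturing, so $J^*=\{j:(x_0)_j\neq(x_1)_j\}$ is equally likely to be selected from either end, and XOR is self-inverse. The paper does not address your two caveats; both are genuine and worth keeping, but each needs a small correction.

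In (iv), the ``resample $r$'' option does not restore exact uniformity. For fixed $u$, resampling until Step~3 passes leaves $r$ uniform only on the window $\ell_{\min}-(t-\wt(u))\le\wt(r)\le\wt(u)$. That window is all of $\bits^{\ell_{\min}}$ only if $\min(\wt(u),t-\wt(u))\ge\ell_{\min}$. Conditioned on a successful return, what you can actually prove is your Chernoff version, i.e.\ uniformity up to $e^{-\Omega(t)}$ in total variation. Exact equality holds only for the unconditioned $r$, or for fixed $u$ in the full-window case.

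In (i), the probability of returning a given set is not a function of $J$ and the weight profile of $u$ alone. The repeat--until loop normalises by the one-round acceptance probability, which involves $f_1$ at every candidate $x_0\oplus 1_{J'}$. To close this, use the shift $\pi(x)=x\oplus 1_{J^*}$: it maps the run on $x_0$ under PUF $f$ step by step onto the run on $x_1$ under $f\circ\pi$. Hence $\Pr[\alg(x_1)=x_0\mid f\circ\pi]=\Pr[\alg(x_0)=x_1\mid f]$. A random-function model with responses i.i.d.\ across challenges is invariant under $f\mapsto f\circ\pi$, which gives (i) on average.

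The same modelling is what licenses your independence of $x_0$ from $J$ in (ii) and in the last claim of (iv). That is, responses, including the $f_1$ values used by the separation filter, are independent of the challenge itself.
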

All of these properties are straightforwardly satisfied by the specific steps of the algorithm. However, for the algorithm to be useful in our protocol, an additional important requirement is that it can efficiently find an alternative candidate for a given selected challenge. In other words, the algorithm must be efficient both in its query complexity to the HLPUF and in its combinatorial complexity. In the following three theorems, we show that this is indeed the case. The proofs are deferred to the Appendix~\ref{app:alg-complexity}.
\begin{lemma}[Symmetry of Challenge Generation]
    Let $x_0\in\{0,1\}^n$ be a challenge selected uniformly at random. Let $x_1$ be a challenge created by Algorithm \ref{alg:fake} from $x_0$, by flipping a set of indices in $J\subseteq[t].$ Given the unforgeability conditions of the HLPUF, then:
    \begin{equation}        \Pr\left[\alg(x_0)=x_1\right]=\Pr\left[\alg(x_1)=x_0\right]
    \end{equation}
    \begin{proof}
        Let $J=\{j_1...j_i\}\subseteq[t]$ be the set of indices selected in \ref{alg:fake}. The algorithm produces $x_1$ from $x_0$ by flipping all indices within set $J$.
        \begin{equation}
            \forall\;j\in J,\;\;\;(x_1)_j\gets(x_0)_j\oplus1
        \end{equation}
        By the assumptions of the construction of the HLPUF, the output distribution of $f_2(x)$ is identical for all $x\in\{0,1\}^n$. As such, for any two challenges $x,x'$, the distributions of $f_2(x), f_2(x')$ are identical. Consequently, the distribution of all possible sets $J$ generated from a response is identical for all uniformly sampled challenges. Since the indices are selected at random from the positions of $f_2(x)$, the probability of selecting a set $J$ is identical for all challenges. As such, the probability to obtain a specific set $J^*:=\{j\in[t]:(x_0)_j\neq(x_1)_j\}$ is the same for both challenges $x_0,x_1$. Furthermore, XOR is self-inverse. As such, applying $J^*$ to $x_0$ generates $x_1$ and applying $J^*$ to $x_1$ generates $x_0$. Thus the probability that $x_0$ generates $x_1$ is the same as the probability that $x_1$ generates $x_0$
        \begin{equation}
            \Pr[\alg(x_0)=x_1]=\Pr[J=J^*|x_0]=\Pr[J=J^*|x_1]=\Pr[\alg(x_1)=x_0]
        \end{equation}
    \end{proof}
\end{lemma}
\begin{theorem}[High-Probability  Combinatorial Feasibility Test in Algorithm~\ref{alg:fake}]
\label{thm:prob-suc-alg1}
Let $\ell_{min}$ be the number of qubits in $\rho_b$, let $t$ be the length of the string $f_2(x_0)$ and let $\delta_r$ be the bias of the HLPUF. The probability that the combinatorial feasibility test in Algorithm~\ref{alg:fake} is satisfied for one sampled string $r$ is:
\begin{equation}
    \Pr[success]=\sum_{y=0}^{\ell_{min}}\left(\frac{1}{2}\right)^{\ell_{min}}\binom{\ell_{min}}{y}\sum_{k=y}^{t-(\ell_{min}-y)}\binom{t}{k}\left(\frac{1}{2}+\delta_r\right)^k\left(\frac{1}{2}-\delta_r\right)^{t-k}
\end{equation}
\end{theorem}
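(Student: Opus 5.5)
The plan is to compute the success probability of the feasibility test directly, by conditioning on the Hamming weight of the sampled string $r$ and then using the i.i.d.\ structure of $f_2(x_0)$ from Assumption~\ref{ass:HLPUF}. The test in Algorithm~\ref{alg:fake} passes exactly when $f_2(x_0)$ has at least $\wt(r)$ ones and at least $\ell_{min}-\wt(r)$ zeros. These are precisely the conditions under which the inner sampling loop can select $\ell_{min}$ distinct positions with $f_2(x_0)_J=r$. So the event depends on $r$ only through $y:=\wt(r)$, and on $f_2(x_0)$ only through $K:=\wt(f_2(x_0))$.

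First I will compute the law of $y$. Since $r$ is drawn uniformly from $\bits^{\ell_{min}}$ and independently of the HLPUF, we have $\Pr[\wt(r)=y]=\binom{\ell_{min}}{y}2^{-\ell_{min}}$ for $0\le y\le \ell_{min}$.

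Next I will compute the law of $K$. By Assumption~\ref{ass:HLPUF}, the $t$ bits of $f_2(x_0)$ are i.i.d.\ Bernoulli with success probability $p=\tfrac12+\delta_r$. I take the bias to be exactly $\delta_r$ in a fixed direction, which is the reading needed to obtain an equality rather than a bound. Hence $\Pr[K=k]=\binom{t}{k}p^k(1-p)^{t-k}$.

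I will then rewrite the test condition, $K\ge y$ and $t-K\ge \ell_{min}-y$, as $y\le K\le t-(\ell_{min}-y)$. Because $r$ and $f_2(x_0)$ are independent, the law of total probability gives
\begin{equation*}
\Pr[success]=\sum_{y=0}^{\ell_{min}}\Pr[\wt(r)=y]\sum_{k=y}^{t-\ell_{min}+y}\Pr[K=k],
\end{equation*}
and substituting the two distributions yields the stated formula. Since $\ell_{min}=\alpha t\le t$, the inner range is nonempty, or at worst empty and contributing zero, so no boundary cases need separate treatment.

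The main obstacle is interpretive rather than computational. Assumption~\ref{ass:HLPUF} only gives $\abs{\Pr[f(X)_j=1]-\tfrac12}\le\delta_r$, so I must state explicitly that the formula is taken with every bit having bias exactly $+\delta_r$. I would remark that the expression with $p=\tfrac12-\delta_r$ is identical by the symmetry $k\mapsto t-k$ combined with $y\mapsto \ell_{min}-y$. I also need independence of $r$ from $x_0$ and from the token, which holds because $r$ is sampled fresh in line 1 of Algorithm~\ref{alg:fake}.
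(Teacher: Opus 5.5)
Your proof is correct and follows the paper's argument. Like the paper, you condition on $Y=\wt(r)\sim\mathrm{Bin}(\ell_{min},\tfrac12)$, take $X=\wt(f_2(x_0))\sim\mathrm{Bin}(t,\tfrac12+\delta_r)$, rewrite the test as $y\le X\le t-(\ell_{min}-y)$, and apply the law of total probability using independence of $r$ and $f_2(x_0)$. Your explicit note that the equality needs the bias to be exactly $\delta_r$ is a useful clarification the paper leaves implicit, as is your observation that $p=\tfrac12-\delta_r$ gives the same value under $k\mapsto t-k$, $y\mapsto \ell_{min}-y$.
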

\begin{theorem}[High-probability combinatorial feasibility of Algorithm~\ref{alg:fake}]
\label{thm:combinatorial-feasibility}
Let $x_0 \xleftarrow{\$} \{0,1\}^{n}$, and suppose that the bits of
$f_2(x_0)\in\{0,1\}^{t}$ are independently and identically distributed with a bias at most $\delta_r$. where $p:=\Pr[f_2(x_0)_j=1]$. Let $r\xleftarrow{\$}\{0,1\}^{\ell_{\min}}$ be sampled independently,
where
\begin{equation}
    \ell_{\min}=\alpha t,\qquad 0<\alpha<1,
\end{equation}
and assume that $\ell_{\min}$ is an integer. If
\begin{equation}
    \frac{\alpha}{2}<p<1-\frac{\alpha}{2},
\end{equation}
then the combinatorial feasibility condition in Step~3 of Algorithm~1,
namely
\begin{equation}
    \operatorname{wt}(f_2(x_0))\geq \operatorname{wt}(r)\quad\text{and}\quad t-\operatorname{wt}(f_2(x_0))\geq\ell_{\min}-\operatorname{wt}(r),
\end{equation}
is satisfied with probability
\begin{equation}
    \Pr[\mathrm{feasible}]
    =
    1-e^{-\Omega(t)}.
\end{equation}
\end{theorem}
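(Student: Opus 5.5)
Write $W:=\wt(f_2(x_0))\sim\mathrm{Bin}(t,p)$ and $R:=\wt(r)\sim\mathrm{Bin}(\ell_{\min},\tfrac12)$. These are independent, because $r$ is sampled independently of $x_0$ and the bits of $f_2(x_0)$ are i.i.d.\ by Assumption~\ref{ass:HLPUF}. The feasibility event is then
\begin{equation}
R\le W\le t-\ell_{\min}+R .
\end{equation}
Its failure is the union of the two events $\{W<R\}$ and $\{W>t-\ell_{\min}+R\}$. The plan is to show each event has probability $e^{-\Omega(t)}$ by concentrating both $W$ and $R$ around their means and using the hypothesis $\tfrac{\alpha}{2}<p<1-\tfrac{\alpha}{2}$ to create a linear gap.

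The means are $\mathbb{E}[W]=pt$ and $\mathbb{E}[R]=\ell_{\min}/2=\alpha t/2$. For the first event, the gap $\gamma_1:=p-\tfrac{\alpha}{2}>0$ is a constant. Set $\epsilon:=\gamma_1/2$. By Hoeffding's inequality,
\begin{equation}
\Pr[W\le (p-\epsilon)t]\le e^{-2\epsilon^2 t},\qquad \Pr[R\ge (\tfrac{\alpha}{2}+\epsilon)t]\le e^{-2\epsilon^2 t/\alpha}.
\end{equation}
The second bound follows because $R$ is a sum of $\alpha t$ bounded variables and the deviation is $\epsilon t=(\epsilon/\alpha)\ell_{\min}$. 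Outside these two tail events we have $W>(p-\epsilon)t=(\tfrac{\alpha}{2}+\epsilon)t>R$, so a union bound gives $\Pr[W<R]\le 2e^{-2\epsilon^2 t}$, using $\alpha\le 1$.

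The second event is handled symmetrically. The condition $W>t-\ell_{\min}+R$ is equivalent to $t-W<\ell_{\min}-R$. Here $t-W\sim\mathrm{Bin}(t,1-p)$ and $\ell_{\min}-R\sim\mathrm{Bin}(\ell_{\min},\tfrac12)$, which is the same situation with $p$ replaced by $1-p$. The gap is now $\gamma_2:=1-p-\tfrac{\alpha}{2}>0$, which is exactly the upper half of the hypothesis. Repeating the argument gives a bound of $2e^{-\gamma_2^2 t/2}$.

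Adding the two bounds gives $\Pr[\neg\mathrm{feasible}]\le 4e^{-\min(\gamma_1,\gamma_2)^2 t/2}=e^{-\Omega(t)}$. The constant in the exponent depends only on $\alpha$ and $p$, which are fixed as $t\to\infty$.

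As a consistency check, this is the same quantity as the exact sum in Theorem~\ref{thm:prob-suc-alg1}. The inner binomial sum is $\Pr[y\le W\le t-\ell_{\min}+y]$, and the outer sum averages over $R=y$.

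The main technical point is choosing the slack $\epsilon$ so that a single constant separates $W$ and $R$ on both sides simultaneously. Here $W$ is defined on $t$ coordinates while $R$ is defined on $\alpha t$ coordinates, so the rescaling in the Hoeffding exponent for $R$ must be tracked. With $\epsilon=\gamma_i/2$ and $\alpha\le 1$, this rescaling only improves that exponent.

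A caveat on the hypotheses: the bias bound $|p-\tfrac12|\le\delta_r$ is not needed. Only the strict inequalities on $p$ matter, and these must hold with constants independent of $t$.
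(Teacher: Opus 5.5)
Your proof is correct and takes essentially the same route as the paper. Both arguments bound the same four binomial tails of $\wt(f_2(x_0))$ and $\wt(r)$, using a constant slack taken from the strict inequalities $\frac{\alpha}{2}<p<1-\frac{\alpha}{2}$; the paper uses multiplicative Chernoff bounds with a slack $\varepsilon_y$ and combines them via independence, while you use Hoeffding with midpoint thresholds and a union bound, so in your version the independence of $r$ and $x_0$ is not actually needed.
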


\begin{theorem}[Query complexity of Algorithm~\ref{alg:fake}]
\label{thm:fake-query}
Conditioned on the combinatorial feasibility test being satisfied, for every fixed $\delta_r<1/2$ the expected number of unlocked-HLPUF queries made by Algorithm~\ref{alg:fake} satisfies
\begin{equation}
    \mathbb{E}[Q]=2+e^{-\Omega(s)}.
\end{equation}
\end{theorem}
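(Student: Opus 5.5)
The plan is to count the queries directly and reduce the statement to bounding the per-iteration rejection probability of the \texttt{Repeat} loop. Algorithm~\ref{alg:fake} makes exactly one unlocked query on $x_0$ (Step~2), and then one query on the candidate $x_1$ in each iteration of the loop. Let $N$ be the number of iterations, so that $Q = 1 + N$. If each iteration independently fails the test $\abs{\mbraket{\psi^{f_1(x_0)}}{\psi^{f_1(x_1)}}}^2\le 2^{-s/2}$ with probability at most $q$, then $N$ is dominated by a geometric variable. This gives $\mathbb{E}[N]\le 1/(1-q) = 1 + q/(1-q)$, and hence $\mathbb{E}[Q] \le 2 + q/(1-q)$. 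It therefore suffices to show $q = e^{-\Omega(s)}$.

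To bound $q$, I will use the product structure of the verifier states. Since $s=2k$, $\ket{\psi^{f_1(x)}}$ is a $k$-qubit BB84 product state, so the squared overlap factorises over coordinates. On coordinate $j$ the factor is:
\begin{itemize}
\item $1$ if the (basis, value) pairs coincide;
\item $0$ if the bases agree and the values differ;
\item $1/2$ if the bases differ.
\end{itemize}
Hence the squared overlap is either $0$, when some coordinate is orthogonal, or $2^{-D}$, where $D$ is the number of basis disagreements. The threshold is $2^{-s/2}=2^{-k}$. The test therefore fails exactly when no coordinate is orthogonal and at least one coordinate is identical, so
\begin{equation}
q \le \Pr[\text{no coordinate of the $k$ verifier qubits is orthogonal}].
\end{equation}
By Assumption~\ref{ass:HLPUF}, the bits of $f_1(x_0)$ and $f_1(x_1)$ are i.i.d.\ with bias at most $\delta_r$. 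Consequently each coordinate is orthogonal with probability at least $c := 2\left(\tfrac12-\delta_r\right)^2\cdot \left(\tfrac12-\delta_r\right)^{0}\cdot 2(\tfrac12-\delta_r)(\tfrac12+\delta_r)\cdot\tfrac{1}{1}$; the essential point is only that $c$ is a positive constant for every fixed $\delta_r<1/2$. In the unbiased case the coordinate is orthogonal with probability exactly $1/4$, and then $q = (3/4)^k-(1/2)^k$. In general, independence across coordinates gives $q\le (1-c)^k = e^{-\Omega(s)}$, and substituting into the bound on $\mathbb{E}[Q]$ finishes the argument. Conditioning on feasibility (Theorem~\ref{thm:combinatorial-feasibility}) only affects $f_2(x_0)$, which governs the choice of $J$, and not $f_1$. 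So this conditioning does not change the bound.

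The main obstacle is justifying that $f_1(x_1)$ is a fresh sample, independent of $f_1(x_0)$ and of the previous iterations. For $x_1\neq x_0$ (guaranteed since $|J|=\ell_{\min}\ge1$), I will invoke the i.i.d.\ clause of Assumption~\ref{ass:HLPUF} together with the uniqueness and collision-resistance requirements assumed for the cPUF, treating responses on distinct challenges as independent.

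Across iterations, however, the loop may resample a set $J$ that was already rejected, and then it re-queries the same $x_1$; this creates dependence. I would handle this in one of two ways:
\begin{itemize}
\item Argue that the number of admissible ordered lists $J$ is exponential in $t$, so that repeats occur with negligible probability, which is absorbed into the $e^{-\Omega(s)}$ term.
\item Alternatively, have rejected candidates excluded, which only decreases $\mathbb{E}[N]$.
\end{itemize}
In either case the geometric domination remains valid up to an $e^{-\Omega(s)}$ correction.
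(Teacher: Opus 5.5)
Your proposal is correct and is essentially the paper's proof: $Q=1+N$ with $N$ dominated by a geometric variable, and a candidate passes the separation test whenever at least one of the $s/2$ verifier qubits is orthogonal to the original one, which has constant probability per coordinate, giving $1-p_{\mathrm{sep}}\le(1-c)^{s/2}=e^{-\Omega(s)}$ and hence $\mathbb{E}[Q]\le 1+1/p_{\mathrm{sep}}=2+e^{-\Omega(s)}$. Two points to tighten. First, every iteration is tested against the same $f_1(x_0)$, so for the geometric domination the per-coordinate orthogonality bound must hold conditionally on $f_1(x_0)$ rather than on average; the paper's $(\tfrac{1}{2}-\delta_r)^2$ is such a worst-case bound, and your $c$ simplifies to $4(\tfrac{1}{2}-\delta_r)^3(\tfrac{1}{2}+\delta_r)\le(\tfrac{1}{2}-\delta_r)^2$, so your conclusion survives. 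Second, of your two remedies for repeated candidates only the second one controls an expectation: it treats each tested candidate as fresh, which is what the paper's ``sampling assumption'' posits, whereas a negligible-probability event on which the loop keeps resampling failing candidates cannot simply be absorbed into $\mathbb{E}[Q]$.
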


\begin{protocolboxenv}
\begin{protocol}[HLPUF bit commitment.]\label{bc-prot}

\medskip
\textit{Public parameters:} security parameters $s=2k$ and $t$; challenge space $\X$; minimum basis-distance parameter $\ell_{\min}$; The protocol uses an Asymmetric HLPUF as defined in Section~\ref{sec:const} with bias parameter $\delta_r$.

\medskip
\textbf{Setup phase:}
\begin{enumerate}[label=\arabic*.]
    \item Alice queries the HLPUF in \texttt{mode 0} on enough challenges to obtain a database of challenge-response pairs
    \[
    DB=\{(x,f_1(x),f_2(x)) : x\in\mathcal{U}\}.
    \]
    For each enrolled challenge \(x\in\mathcal{U}\), she also runs Algorithm~\ref{alg:fake} to produce an associated challenge $x_1$ and retains the corresponding ordered list $J$.
    \item Alice locks the HLPUF, switches it to \texttt{mode 1}, and sends it to Bob.
\end{enumerate}

\textbf{Commit phase:}
The commitment is a bit \(b\in\bits\).
\begin{enumerate}[label=\arabic*.]
    \setcounter{enumi}{2}
    \item Alice samples a challenge $x_0\in\mathcal{U}$, together with its associated challenge $x_1$ and ordered list $J$.
    \item Alice sends Bob the pair $(x_0, x_1)$ together with the ordering $J$.
    \item Alice prepares the $\ell_{min}$-qubit state
    \[
    \rho_b :=
    \begin{cases}
    \proj{f_2(x_0)_J}_{\beta(x_0)} & \text{if } b=0,\\[1mm]
    \proj{f_2(x_0)_J}_{\beta(x_1)} & \text{if } b=1.
    \end{cases}
    \]
    Where $f_2(x_0)_J$ corresponds to the indices of $f_2(x_0)$ that are in $J:=\{j\in[t]:(x_0)_j\neq(x_1)_j\}$ and $|J|=\ell_{min}$.
    \item Alice sends the register containing \(\rho_b\) to Bob.
    \item Bob checks that $\forall j\in J, (x_0)_j\oplus(x_1)_j=1$, otherwise he rejects
\end{enumerate}

\textbf{Open phase:}
\begin{enumerate}[label=\arabic*.]
    \setcounter{enumi}{6}
    \item Alice sends the triple \((b,a,z)\), where
    \[
    a=f_1(x_0),
    \qquad
    z=f_2(x_0).
    \]
\end{enumerate}

\textbf{Verification phase:}
\begin{enumerate}[label=\arabic*.]
    \setcounter{enumi}{7}
    \item Bob prepares the verifier state \(\ket{\verst{a}}\) from the revealed classical string \(a\).
    \item Bob queries the HLPUF on \((x_0,\ket{\verst{a}})\). If the token outputs \(\bot\), he rejects. Otherwise he receives a $l$-qubit state \(\sigma\).
    \item Bob measures \(\sigma\) with the two-outcome test
    \[
    \{\proj{\comst{z}},\,\Id-\proj{\comst{z}}\}.
    \]
    If this test rejects, he rejects the opening.
    \item Bob measures the stored commitment register in basis
    \[
    \gamma_b:=
    \begin{cases}
    \beta(x_0) & \text{if } b=0,\\
    \beta(x_1) & \text{if } b=1.
    \end{cases}
    \]
    obtaining a string \(z'\in\bits^{\ell_{min}}\). If \(z'\neq z_J\), he rejects. Otherwise, he accepts bit \(b\).
\end{enumerate}
\end{protocol}
\end{protocolboxenv}

\subsection{Formal protocol description}
Our hardware-based bit-commitment protocol works as follows: In the setup phase, Alice uses the HLPUF in its unlocked mode to build a database of valid challenge-response pairs and, for each challenge, computes a corresponding challenge. She then locks the device and sends it to Bob. To commit to a bit $b$, Alice chooses a challenge $x_0$ together with its counterpart $x_1$, sends Bob the ordered pair $(x_0,x_1)$ together with the retained ordering $J$, and prepares a BB84 sequence of states using the response $f_2(x_0)$, encoded either in the basis of the challenge $x_0$ or in the basis of the alternative challenge $x_1$ depending on the committed bit. Later, to open the commitment, Alice reveals which challenge was used to prepare the state and provides the corresponding classical response information. Bob then uses the locked HLPUF to verify that the revealed response is genuine, and checks that the previously received quantum state is consistent with the claimed bit. In this way, the protocol forces Alice to commit using valid HLPUF data, while preventing Bob from learning the bit before the opening phase. The formal description of the protocol is given in Protocol~\ref{bc-prot}

\section{Security proof}
We now formally prove correctness, hiding and binding for Protocol~\ref{bc-prot}.
\begin{theorem}[Correctness]
\label{thm:correctness}
If both parties are honest and the announced pair $(x_0,x_1)$ was produced by Algorithm~\ref{alg:fake}, then Protocol~\ref{bc-prot} opens correctly to the committed bit, and Bob accepts with probability $1$.
\end{theorem}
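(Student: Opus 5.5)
The plan is to follow an honest execution through the three checks Bob makes. Each check is passed deterministically, so the total acceptance probability is $1$. It also helps to note that Alice's declared bit is the only one Bob can accept at step 11. Throughout, I use the hardware model of Section~\ref{sec:const}: the locked token implements the same $f_1,f_2$ as the unlocked one (Assumption~\ref{ass:hardware-module}). I also treat the cPUF as robust, i.e.\ it answers the same challenge with the same response, as is implicitly assumed whenever a classical PUF is referred to.

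\textbf{Commit-phase check (step 7).} By Proposition~\ref{prop:fake-properties} (the characterisation of $J$ and property (iii)), the underlying set of $J$ is exactly the set of positions where $x_0$ and $x_1$ differ, so $(x_0)_j\oplus(x_1)_j=1$ for all $j\in J$ and Bob does not reject.

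\textbf{Token checks (steps 8--10).} The honest opening sends $a=f_1(x_0)$ and $z=f_2(x_0)$.
\begin{enumerate}[label=(\alph*)]
\item Bob prepares $\ket{\verst{a}}=\ket{\verst{f_1(x_0)}}$. The locked token measures $\{\proj{\verst{f_1(x_0)}},\Id-\proj{\verst{f_1(x_0)}}\}$, and on this pure state the first outcome occurs with probability $|\mbraket{\verst{f_1(x_0)}}{\verst{f_1(x_0)}}|^2=1$. So the token does not output $\bot$, and it returns $\sigma=\proj{\comst{f_2(x_0)}}$.
\item Bob's test $\{\proj{\comst{z}},\Id-\proj{\comst{z}}\}$ with $z=f_2(x_0)$ then accepts with probability $1$, again because it is a projector onto the very state received.
\end{enumerate}

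\textbf{Register check (step 11).} This step carries the actual content and needs the most bookkeeping, since one must keep track of the ordering $J$. The basis string is $\beta(x)=x^{\mathrm{bas}}_J$ and the value string is $f_2(x_0)_J=z_J$, both indexed along the same ordered list $(j_1,\ldots,j_{\ell_{\min}})$. Alice prepared $\rho_b=\proj{z_J}_{\gamma_b}$, with $\gamma_0=\beta(x_0)$ and $\gamma_1=\beta(x_1)$. Bob measures qubit $i$ in basis $(\gamma_b)_i$, the same basis in which that qubit was prepared as the eigenstate $\ket{(z_J)_i}_{(\gamma_b)_i}$. Each outcome therefore equals $(z_J)_i$ with certainty, so $z'=z_J$ and Bob accepts bit $b$.

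\textbf{Opening to the committed bit.} Since Alice announces her true $b$, Bob outputs $b$. Multiplying the deterministic acceptance probabilities of the independent steps gives total acceptance probability $1$.

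The only point I would flag is that the commit state is prepared from unlocked-mode data while verification uses locked-mode data. Correctness therefore relies on the two modes sharing the same fixed $f$, which I take from Assumption~\ref{ass:hardware-module} together with robustness.
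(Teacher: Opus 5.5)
Your proof is correct and follows the same route as the paper's. Both trace the honest run through the token query (Step~9), the projective test on the returned state (Step~10), and the measurement of the commitment register in the basis $\gamma_b$ in which it was prepared (Step~11), each of which succeeds with certainty. Your explicit check of the commit-phase condition in Step~7, and your remark that correctness needs the locked and unlocked modes to evaluate the same deterministic $f$, are details the paper leaves implicit; they do not change the argument.
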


\begin{proof}
Let Alice commit to bit $b\in\bits$, and let
\begin{equation}
z=f_2(x_0)\in\bits^t,
\qquad
a=f_1(x_0)\in\bits^s.
\end{equation}
Let $z_{J}$ be the elements of $z$ in the group $J:=\{j\in[t]:(x_0)_j\neq(x_1)_j\}$ and $|J|=\ell_{min}$. By the protocol, Alice sends the commitment register in the state
\begin{equation}
\rho_b=
\begin{cases}
\proj{z_J}_{\beta(x_0)} & \text{if } b=0,\\[1mm]
\proj{z_J}_{\beta(x_1)} & \text{if } b=1.
\end{cases}
\end{equation}

In the opening phase, Alice reveals the challenge $x_0$, together with $b$, $a=f_1(x_0)$, and $z=f_2(x_0)$.

In Step 9, Bob queries the locked HLPUF on $(x_0,\ket{\verst{a}})$. Since $a=f_1(x_0)$, the verifier state is exactly the correct state for the original challenge, and the token outputs the $l$-qubit BB84 state $\ket{\comst{z}}$. Hence Bob's test in Step 10 accepts with probability $1$.

In Step 11, Bob measures the stored commitment register in basis
\begin{equation}
\gamma_b=
\begin{cases}
\beta(x_0) & \text{if } b=0,\\
\beta(x_1) & \text{if } b=1.
\end{cases}
\end{equation}
This is exactly the basis in which Alice prepared the state $\rho_b$. Therefore the measurement outcome is precisely $z_J$, and Bob accepts. Thus the protocol opens to the correct bit and is accepted with probability 1.
\end{proof}

To prove hiding, we first prove a Lemma to show that, without any side information from the HLPUF, commitment states $\rho_0$ and $\rho_1$ are perfectly indistinguishable, and further the conditional states for the ordered pairs are also indistinguishable.

\begin{lemma}[Perfect intrinsic ensemble hiding]
\label{lem:intrinsic-hiding}
Fix a public ordered pair $(x_0,x_1)$ and its associated ordering $J$ produced by Algorithm~\ref{alg:fake}. Given Property (iv) in Proposition~\ref{prop:fake-properties}, the two honest commit ensemble states corresponding to $b=0$ and $b=1$ are perfectly indistinguishable:
\begin{equation}
    \dtr\!\left(\rho_{0},\rho_{1}\right)=0.
\end{equation}
\end{lemma}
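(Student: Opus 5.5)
The plan is to compute both ensemble states explicitly and show they both equal the maximally mixed state on the $\ell_{\min}$-qubit commitment register. Fix the public data $(x_0,x_1,J)$. The only information Alice uses that Bob does not see is the value string $u_J=f_2(x_0)_J$. By construction of Algorithm~\ref{alg:fake}, this string coincides with the random string $r$. Bob's ensemble state for bit $b$ is therefore
\begin{equation}
\rho_b=\sum_{v\in\bits^{\ell_{\min}}}\Pr[u_J=v\mid x_0,x_1,J]\;\proj{v}_{\gamma_b},
\qquad \gamma_0=\beta(x_0),\ \gamma_1=\beta(x_1).
\end{equation}
Here, within this lemma, I treat the locked token as giving Bob no extra information on $u_J$. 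That leakage is exactly what is deferred to the unforgeability part of the hiding theorem.

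The key step is to invoke Property (iv) of Proposition~\ref{prop:fake-properties}, which gives $\Pr[u_J=v]=2^{-\ell_{\min}}$ for every $v$. I need this to hold \emph{conditionally} on the announced triple $(x_0,x_1,J)$. I would argue it as follows. The string $r$ is sampled independently of $x_0$. Given $r$ and $f_2(x_0)$, the ordering $J$ is chosen uniformly among positions matching $r$. The repeat-until filter depends only on $f_1(x_0)$ and $f_1(x_1)$, and not on $r$ itself. Hence, from Bob's side, where $f_2(x_0)$ is unknown and i.i.d.\ by Assumption~\ref{ass:HLPUF}, the posterior on $r$ given $(x_0,x_1,J)$ stays uniform.

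Once conditional uniformity is established, the computation is immediate. For any fixed basis string $\gamma\in\bits^{\ell_{\min}}$, the map $v\mapsto\ket{v}_\gamma$ is an orthonormal basis of $(\mathbb{C}^2)^{\otimes\ell_{\min}}$. Therefore
\begin{equation}
\sum_v 2^{-\ell_{\min}}\proj{v}_{\gamma}=\bigotimes_{j=1}^{\ell_{\min}}\tfrac12\left(\proj{0}_{\gamma_j}+\proj{1}_{\gamma_j}\right)=\frac{\Id}{2^{\ell_{\min}}}.
\end{equation}
This holds in particular for both $\gamma=\beta(x_0)$ and $\gamma=\beta(x_1)$. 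Since both are the same regardless of the basis pattern, $\rho_0=\rho_1$ and so $\dtr(\rho_0,\rho_1)=0$. Equivalently, coordinate by coordinate, each qubit is an equal mixture of two orthogonal states in some basis, which is $\Id/2$. This is the ``locally identical hence globally identical'' argument from the introduction, and it works because the value bits are independent uniform across coordinates.

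The main obstacle is the conditional-uniformity step, since Property (iv) as stated is unconditional. I would first try to show that the joint distribution factorises as $\Pr[r]\cdot\Pr[x_0,x_1,J\mid r]$ with the second factor independent of $r$. The difficulty is that the feasibility test and the number of valid choices of $J$ depend on $\wt(f_2(x_0))$ relative to $\wt(r)$. If this factorisation fails, the fallback is to average over the unknown $f_2(x_0)$ as well. By the i.i.d.\ assumption, conditioning on $J$ then reveals nothing about $f_2(x_0)_J$ beyond what $r$ does, and any residual dependence would be absorbed into the bias parameter $\eta$ in the later hiding theorem, rather than into this ideal lemma.
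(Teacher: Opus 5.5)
Your main line is the paper's proof. You write $\rho_b=\sum_v\Pr[U=v]\,\proj{v}_{\beta(x_b)}$, use Property (iv) to make $U$ uniform, and conclude that both states equal $\Id/2^{\ell_{\min}}$. Leakage through the locked token is deferred to unforgeability, as in Theorem~\ref{thm:hiding}. The paper never addresses the conditioning issue you flag. It simply reads Property (iv) as uniformity of $U$ given the fixed announced $(x_0,x_1,J)$, which the phrase ``Given Property (iv)'' in the statement licenses, and you can do the same.

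However, the derivation you sketch for that step would not succeed exactly, and your claim that the posterior on $r$ ``stays uniform'' is false.

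\textbf{The ordering is harmless.} Your worry about the number of admissible orderings is unfounded. Average over $u=f_2(x_0)$ with i.i.d.\ bits, and write $w=\wt(u)$, $y=\wt(r)$, and $m$ for the weight of $u$ off $J$. The factor $1/[(w)_y(t-w)_{\ell_{\min}-y}]$ (falling factorials) cancels against $\binom{t-\ell_{\min}}{m}\Pr[u]$. This gives $\Pr[J\mid r,\mathrm{feasible}]=(t-\ell_{\min})!/t!$ for every ordered tuple, so $J$ carries no information about $r$.

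\textbf{The Step~3 test is the real obstruction.} Conditioning on it passing reweights $r$ by $\Pr[y\le W\le y+t-\ell_{\min}]$ with $W\sim\mathrm{Bin}(t,p)$. This depends on $y$ even for $\delta_r=0$. Take $t=4$, $\ell_{\min}=2$, $p=\tfrac12$, $\beta(x_0)=00$ without loss of generality, and ignore the separation filter. The weights for $y=0,1,2$ are $\tfrac{11}{16},\tfrac{14}{16},\tfrac{11}{16}$. Hence $\rho_0=\tfrac14\Id-\tfrac{3}{100}Z\otimes Z$ and $\rho_1=\tfrac14\Id-\tfrac{3}{100}X\otimes X$ (Pauli operators), so $\dtr(\rho_0,\rho_1)=\tfrac{3}{50}$.

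From Algorithm~\ref{alg:fake} itself, the feasibility tilt costs only $e^{-\Omega(t)}$ in trace distance, via Theorem~\ref{thm:combinatorial-feasibility}. The separation filter adds a further small renormalisation. Neither correction is exactly zero. The ``perfect'' conclusion therefore rests entirely on taking Property (iv) as an exact hypothesis, which is how both you and the paper end up using it.
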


\begin{proof}
Let $x_0$ denote the original challenge, and let $x_1$ denote the alternative element of the pair. These are sent labelled during the commitment stage, with Bob having full knowledge of their entire classical description. Given the properties of unforgeability of the HLPUF, he cannot know beyond a negligible probability what the corresponding $f_2(x_0)$ or $f_2(x_1)$ are. This consequently results in him gaining no information from using these challenges to either measure the states early, or to try to unlock the HLPUF before the reveal stage. This forces him to attempt to distinguish the two possible quantum ensembles with no knowledge of the underlying encoded string.

Let $U:=f_2(x_0)_J\in\bits^{\ell_{min}}$ denote the random payload string. The states relevant to Bob are the ensemble density operators
\begin{equation}
    \rho_0=\sum_{u\in\bits^{\ell_{min}}}\Pr[U=u]\,\proj{u}_{\beta(x_0)}
\end{equation}
and
\begin{equation}
    \rho_1=\sum_{u\in\bits^{\ell_{min}}}\Pr[U=u]\,\proj{u}_{\beta(x_1)}.
\end{equation}
By Property (iv) of Proposition~\ref{prop:fake-properties}, $U$ is uniformly distributed over $\bits^{\ell_{min}}$. Therefore
\begin{equation}
    \rho_0=\frac{1}{2^{\ell_{min}}}\sum_{u\in\bits^{\ell_{min}}}\proj{u}_{\beta(x_0)}=\frac{\Id}{2^{\ell_{min}}}
\end{equation}
and similarly
\begin{equation}
    \rho_1=\frac{1}{2^{\ell_{min}}}\sum_{u\in\bits^{\ell_{min}}}\proj{u}_{\beta(x_1)}=\frac{\Id}{2^{\ell_{min}}}.
\end{equation}
Hence $\rho_0=\rho_1$, and therefore
\begin{equation}
    \dtr\!\left(\rho_{0},\rho_{1}\right)=0.
\end{equation}
\end{proof}
Now we show that the hiding parameter is essentially bounded by the unforgeability of HLPUF, and hence we show an exponentially small hiding parameter.  
\begin{theorem}[Hiding]
\label{thm:hiding}
Let $\eps_{\mathrm{forge}}$ be the unforgeability parameter of HLPUF according to Assumption~\ref{ass:unf}. Then Protocol~\ref{bc-prot} is $\eps_{\mathrm{hide}}$-hiding with
\begin{equation}
\eps_{\mathrm{hide}}\le \eps_{\mathrm{forge}}.
\end{equation}
In particular, if the HLPUF is exponentially unforgeable in $s$, then $\eps_{\mathrm{hide}}=\negl(s).$
\end{theorem}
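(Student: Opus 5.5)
The plan is a hybrid argument. It compares Bob's real view after the commit phase with an idealised view in which the locked token gives him no useful information about $f_2(x_0)$. Lemma~\ref{lem:intrinsic-hiding} then handles the idealised view. Concretely, a dishonest Bob's view after the commit phase consists of four things: the public classical data $(x_0,x_1,J)$, the commitment register $\rho_b$, the locked token (usable only through the interface of Assumption~\ref{ass:hardware-module}), and any ancilla he holds. We write his joint state as $\tau_b$ for $b\in\bits$ and aim to show $\dtr(\tau_0,\tau_1)\le\eps_{\mathrm{forge}}$ for every strategy.

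The first step models Bob's token access. Under Assumption~\ref{ass:hardware-module}, the token returns the payload state $\ket{\comst{f_2(x)}}$ only if Bob supplies a state passing the verification measurement for $\ket{\verst{f_1(x)}}$. Any query Bob makes before the opening is therefore a forgery attempt: he must produce a valid verifier state for some challenge (in particular $x_0$ or $x_1$) without having seen $f_1$. We define the event $F$ that at least one of Bob's pre-opening queries on a challenge relevant to the pair is accepted, or more generally that his strategy yields non-trivial information about $f_1(x_0),f_2(x_0)$. Assumption~\ref{ass:unf} gives $\Pr[F]\le\eps_{\mathrm{forge}}$, and this probability is the same for $b=0$ and $b=1$, because the token and the classical data do not depend on $b$. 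On the complement of $F$, every query returns $\bot$. Bob's queries then act as a fixed (possibly disturbing) operation, independent of $f_2$, on his own register. His entire strategy is thus a CPTP map applied to the classical data together with $\rho_b$, where the distribution of $\rho_b$ is taken over the unknown payload $U=f_2(x_0)_J$.

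The second step combines the two cases. Conditioned on $\neg F$ and on the public triple $(x_0,x_1,J)$, Lemma~\ref{lem:intrinsic-hiding} together with Property~(iv) of Proposition~\ref{prop:fake-properties} gives
\begin{equation}
\rho_0=\rho_1=\Id/2^{\ell_{\min}}.
\end{equation}
Since the trace distance is non-increasing under Bob's map, the $\neg F$ branches of $\tau_0$ and $\tau_1$ coincide. The trace distance is also convex, so splitting on $F$ gives
\begin{equation}
\dtr(\tau_0,\tau_1)\le \Pr[\neg F]\cdot 0+\Pr[F]\cdot 1\le\eps_{\mathrm{forge}}.
\end{equation}
Alternatively, one can apply a gentle-measurement or "identical until bad" argument to the accepted branch. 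Finally, $\eps_{\mathrm{hide}}=\negl(s)$ follows immediately when $\eps_{\mathrm{forge}}$ is exponentially small in $s$.

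The main obstacle is the first step, namely making rigorous that a non-forging Bob learns nothing correlated with $U$. There are two concerns. First, the public pair $(x_0,x_1,J)$ is itself correlated with $f_2(x_0)$ through Algorithm~\ref{alg:fake}, since $J$ is chosen so that $f_2(x_0)_J=r$. Here I would rely on the fact that $r$ is uniform and independent, so that $U$ remains uniform given $J$; this is exactly the content of Property~(iv). Second, rejected queries still disturb Bob's supplied state. I would handle this by noting that the projector $\Id-\proj{\verst{f_1(x)}}$ acts on Bob's register independently of $b$, so that the rejected branch is a $b$-independent operation. Any residual dependence would then be absorbed into the forging event, with its probability bounded by $\eps_{\mathrm{forge}}$ as stated in Assumption~\ref{ass:unf}.
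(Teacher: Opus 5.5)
Your proposal is correct and follows the paper's proof. Both arguments take $F$ to be the event that Bob gets a verifier state accepted on $x_0$ or $x_1$ before opening, bound $\Pr[F]\le\eps_{\mathrm{forge}}$ via Assumption~\ref{ass:unf}, use Lemma~\ref{lem:intrinsic-hiding} to identify the $\neg F$ branches, and conclude $\dtr(\tau_0,\tau_1)\le\Pr[F]$ by convexity of the trace distance.

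Your linearity remark is what makes the paper's terse step rigorous: on the all-rejected branch, Bob's interaction with the token is a CP map that depends on neither $U$ nor $b$, so applying it to $\mathbb{E}_U[\rho_b]=\Id/2^{\ell_{\min}}$ gives identical subnormalised branches. This is also the correct reason $\Pr[F]$ is the same for both bits. Your stated reason, that the token and classical data do not depend on $b$, is insufficient, because Bob's queries may act on $\rho_b$.

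For the same reason, keep $F$ as the acceptance event only. Rejections do leak a little about $f_1(x_0)$, so your vaguer ``gains non-trivial information'' event need not be bounded by $\eps_{\mathrm{forge}}$.
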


\begin{proof}
Fix an arbitrary cheating Bob. Let $F$ be the event that Bob successfully produces a verifier state accepted by the locked HLPUF on at least one of the two announced candidate challenges before the opening phase. By definition,
\begin{equation}
\Pr[F]\le \eps_{\mathrm{forge}}.
\end{equation}
Since $\eps_{\mathrm{forge}}$ isthe maximum probability, over all cheating strategies for Bob carried out before the opening phase, that Bob successfully produces a verifier state accepted by the locked HLPUF.

Let $\rho_0$ and $\rho_1$ denote Bob's full commit-phase states for honest commitments to $0$ and $1$, respectively. Decompose them according to the event $F$:
\begin{equation}
\rho_b
=
(1-\Pr[F])\,\rho_b^{\neg F}
+
\Pr[F]\,\rho_b^{F},
\qquad
b\in\bits.
\end{equation}
Conditioned on $\neg F$, Bob has no successful verifier forgery, so by Lemma~\ref{lem:intrinsic-hiding},
\begin{equation}
\rho_0^{\neg F}=\rho_1^{\neg F}.
\end{equation}
Therefore, using joint convexity of trace distance,
\begin{equation}
\dtr(\rho_0,\rho_1)
\le
(1-\Pr[F])\,\dtr(\rho_0^{\neg F},\rho_1^{\neg F})
+
\Pr[F]\,\dtr(\rho_0^{F},\rho_1^{F}).
\end{equation}
The first term vanishes and the second is at most $\Pr[F]$, since trace distance is always at most $1$. Hence
\begin{equation}
\dtr(\rho_0,\rho_1)\le \Pr[F]\le \eps_{\mathrm{forge}}.
\end{equation}
This is exactly the claimed bound and by Assumption~\ref{ass:unf}, the right hand side quantity is negligible.
\end{proof}
\begin{lemma}[Bounding Operator Norm of Opening Measurements]
\label{lem:sparse-verifier}
Fix the public pair $(x_0,x_1)$ and its corresponding ordered list $J$, with underlying set
$J=\{j\in[t]:(x_0)_j\neq(x_1)_j\}$ and $|J|=\ell_{min}$, and set $x=x_0$. Let
\begin{equation}
\mathcal{R}_x \subseteq \bits^s\times\bits^t
\end{equation}
denote the set of verifier/output pairs \((a,z)\) such that, if Bob queries the locked HLPUF on \((x,\ket{\verst{a}})\) and then measures the returned state with the test
\begin{equation}
\{\proj{\comst{z}},\,\Id-\proj{\comst{z}}\},
\end{equation}
the verification accepts. For this fixed $J$, let
\begin{equation}
    S:=\{z_J:\exists\;a\in\{0,1\}^s\;\text{such that}\; (a,z)\in\mathcal{R}_x\}.
\end{equation}
Here the operator argument below treats $\mathcal{R}_x$ as an acceptance relation containing at most one accepted output string $z$ for each verifier string $a$. The  finite-copy test above is interpreted probabilistically.
Let $H:=H^{\otimes\ell_{min}}$, and let,
\begin{subequations}
    \begin{equation}
        P:=\sum_{y\in S}\proj{y}_{\beta(x_0)}
    \end{equation}
    \begin{equation}
        Q:=\sum_{y\in S}\proj{y}_{\beta(x_1)}=HPH=\sum_{y\in S}H\proj{y}_{\beta(x_0)}H
    \end{equation}
\end{subequations}
the operator norm of $||P+Q||_{\infty}$ is bounded by,
\begin{equation}
    ||P+Q||_{\infty}\le1+2^{\frac{2s-\ell_{min}}{2}}
\end{equation}
\begin{proof}
The operator norm can be bounded by 
\begin{equation}
    ||P+Q||_{\infty}=1+||PQ||_{\infty}\le1+||PQ||_{F}=1+\sqrt{\Tr[(PQ)^{\dagger}(PQ)]}
\end{equation}
Where $||\cdot||_F$ denotes the Frobenius norm. Given that both $P$ and $Q$ are projectors and the cyclic property of the trace
\begin{equation}
    \Tr[(PQ)^{\dagger}(PQ)]=\Tr[QPPQ]=\Tr[QPQ]=\Tr[HPHPHPH]=\Tr[PHPHP]=\Tr[PHPH]
\end{equation}
As trace is the sum of basis vectors
\begin{equation}
    \Tr[A]=\sum_u\bra{u}A\ket{u}
\end{equation}
Expressing $P$ as a sum of states and expanding out the trace
\begin{equation}
\begin{split}
    &\Tr[\sum_{y,z\in S}\proj{y}H\proj{z}H]=\sum_{u\in\{0,1\}^{\ell_{min}}}\sum_{y,z\in S}\braket{u|y}\bra{y}H\proj{z}H\ket{u}=\\
    &\sum_{y,z\in S}\delta_{u,y}2^{-\ell_{min}}(-1)^{y\cdot z}(-1)^{u\cdot z}=\sum_{y,z\in S}2^{-\ell_{min}}(-1)^{2y\cdot z}=2^{-\ell_{min}}|S|^2
\end{split}
\end{equation}
Note that $|S|\le2^s$ under the acceptance-relation above, since there are $2^s$ input strings $a$ and at most one associated output string for each $a$. Taking the square root leads to
\begin{equation}
    ||P+Q||_{\infty}=1+||PQ||_{\infty}\le1+\sqrt{\frac{|S|^2}{2^{\ell_{min}}}}=1+2^{\frac{2s-\ell_{min}}{2}}
\end{equation}
\end{proof}
\end{lemma}
\begin{theorem}[Binding]
    \label{thm:binding} Protocol~\ref{bc-prot} is $\eps_{\mathrm{bind}}$-binding with
    \begin{equation}
        \eps_{\mathrm{bind}}\le 2^{\frac{2s-\ell_{min}}{2}}
    \end{equation}
    which results in
    \begin{equation}
        p_0+p_1\le 1+2^{\frac{2s-\ell_{min}}{2}}
    \end{equation}
\begin{proof}
    Fix an arbitrary cheating strategy for Alice. This strategy may be fully general: Alice may send Bob one half of an arbitrary entangled state in the commit phase, keep a private ancilla, measure that ancilla adaptively during the opening phase, and choose her classical opening data based on the measurement outcome. We now show that even in this most general setting, the total cheating probability is bounded by the claimed expression. 
    
    By quantum state purification, any mixed state can be represented as a pure bipartite state, in a larger Hilbert space. Alice can leverage this knowledge to keep an entangled register A, in which after concluding on her commitment, Alice can measure this state in a given way to influence Bob’s register B. Let $\rho_B$ denote Bob's reduced commitment state immediately before the opening phase. For an attempted opening of bit $b$, Alice's measurement on her private system induces a collection of subnormalised conditional states $\{\tau_{b,y}\}_{y\in S}$ on Bob's register, satisfying
    \begin{equation}
        \tau_{b,y}\succeq0,\qquad \sum_{y\in S}\tau_{b,y}=\rho_B.
    \end{equation}
    Let $\Pi_{0,y}:=\proj{y}_{\beta(x_0)}$ and $\Pi_{1,y}:=\proj{y}_{\beta(x_1)}$. Conditioned on opening data corresponding to $y$, Bob accepts the commitment register only through the projector $\Pi_{b,y}$. Hence
    \begin{equation}
        p_b\le\sum_{y\in S}\Tr[\Pi_{b,y}\tau_{b,y}].
    \end{equation}
    Since $\Pi_{0,y}\le P$ and $\Pi_{1,y}\le Q$ for every $y\in S$, it follows that
    \begin{equation}
        p_0\le\Tr[P\rho_B],\qquad p_1\le\Tr[Q\rho_B].
    \end{equation}
    Therefore
    \begin{equation}
        p_0+p_1\le\Tr[(P+Q)\rho_B]\le\norm{P+Q}_{\infty}.
    \end{equation}
    Applying Lemma~\ref{lem:sparse-verifier} gives
    \begin{equation}
        p_0+p_1\le||P+Q||_{\infty}\le1+2^{\frac{2s-\ell_{min}}{2}}.
    \end{equation}
\end{proof}
\end{theorem}

\section{Coin flipping from bit commitment}

We now build a coin-flipping protocol from Protocol~\ref{bc-prot} as a black box.

\begin{protocolboxenv}
\textbf{Protocol CF (coin flipping from BC Protocol~\ref{bc-prot}).}

\begin{enumerate}[label=\arabic*.]
    \item Alice samples a uniform random bit \(a\leftarrow\bits\) and commits to \(a\) using Protocol BC$^\star$.
    \item Bob samples a uniform random bit \(b\leftarrow\bits\) and sends \(b\) to Alice.
    \item Alice opens the commitment.
    \item If Bob rejects the opening, the protocol aborts. Otherwise both parties output
    \[
    c = a\oplus b.
    \]
\end{enumerate}
\end{protocolboxenv}

\begin{theorem}[Correctness of coin flipping]
If both parties are honest, Protocol CF$^\star$ outputs a uniform bit and does not abort.
\end{theorem}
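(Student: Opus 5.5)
The argument has two parts: show that honest runs never abort, then show that the output bit is uniform. Both follow from Theorem~\ref{thm:correctness} together with the independence of the two parties' random bits.

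\textbf{No abort.} When both parties are honest, Alice runs the commit phase of Protocol~\ref{bc-prot} on her bit $a$. Step~7 of that protocol is Bob's consistency check that $(x_0)_j\oplus(x_1)_j=1$ for all $j\in J$. This check passes with certainty, because the pair $(x_0,x_1)$ and the ordering $J$ come from Algorithm~\ref{alg:fake}, which flips exactly the coordinates in $J$.

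One point needs care. The setup phase of Protocol~\ref{bc-prot} must already have produced a database entry whose Algorithm~\ref{alg:fake} run did not return \textbf{fail}. I will take the enrolled set $\mathcal{U}$ to contain only challenges for which the algorithm succeeded. Theorems~\ref{thm:combinatorial-feasibility} and~\ref{thm:fake-query} guarantee that enough such challenges exist, at expected constant query cost. With this convention, the hypothesis of Theorem~\ref{thm:correctness} is met.

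Bob's message $b$ does not affect the commitment register or the opening data. So Theorem~\ref{thm:correctness} applies directly: Bob accepts the opening with probability $1$ and recovers exactly $a$. Hence the abort in step~4 of Protocol CF occurs with probability $0$, and both parties output the same bit $c=a\oplus b$.

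\textbf{Uniformity.} Alice samples $a$ uniformly. Bob samples $b$ uniformly and independently, since an honest Bob draws $b$ from fresh randomness without consulting the commitment. Then for any $c^*\in\bits$,
\[
\Pr[c=c^*]=\sum_{b'\in\bits}\Pr[b=b']\Pr[a=c^*\oplus b']=\tfrac12 .
\]
Independence of $a$ and $b$ is in fact more than needed: uniformity of $b$ together with its independence from $a$ suffices, and so does the symmetric condition.

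I do not expect a real obstacle. The only subtle step is making explicit that the honest setup restricts $\mathcal{U}$ to non-failing runs of Algorithm~\ref{alg:fake}, so that Theorem~\ref{thm:correctness} can be invoked.
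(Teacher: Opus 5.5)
Your proposal is correct and follows the same approach as the paper's proof. Both invoke Theorem~\ref{thm:correctness} to get acceptance with certainty, then use that $a$ and $b$ are independent uniform bits to conclude that $c=a\oplus b$ is uniform. Your version also makes explicit a point the paper leaves implicit: $\mathcal{U}$ must contain only non-failing runs of Algorithm~\ref{alg:fake}, so that the hypothesis of Theorem~\ref{thm:correctness} and Bob's commit-phase check are satisfied. One wording slip in your closing remark: what is more than needed is the uniformity of \emph{both} bits, not their independence.
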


\begin{proof}
By Theorem~\ref{thm:correctness}, the underlying commitment opens correctly with certainty. Conditional on acceptance, the protocol does not abort. Since \(a\) and \(b\) are independent uniform bits, \(a\oplus b\) is uniform on \(\bits\).
\end{proof}

\begin{theorem}[Fairness of coin flipping]
\label{thm:coinflip}
Assume Protocol~\ref{bc-prot} is \(\eps_{\mathrm{hide}}\)-hiding and \(\eps_{\mathrm{bind}}\)-binding. Then Protocol CF has bias at most
\begin{equation}
\delta_{\mathrm{CF}} \le \frac12\max\{\eps_{\mathrm{hide}},\eps_{\mathrm{bind}}\}.
\end{equation}
In particular, under Theorems~\ref{thm:hiding} and \ref{thm:binding},
\begin{equation}
    \delta_{\mathrm{CF}}\le\frac12\max\left\{\eps_{\mathrm{forge}},\,2^{-\frac{\ell_{\min}}{4}}\right\}.
\end{equation}
\end{theorem}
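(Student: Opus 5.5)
We treat the two cheating cases separately and bound the bias for each target bit $c^*\in\bits$; the overall bias is the larger of the two.

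\emph{Cheating Bob.} Alice is honest, so $a$ is uniform and she commits honestly. Bob's message $b$ is some (possibly randomised) function of his view at the end of the commit phase. That view is described by $\rho_0$ or $\rho_1$, together with the public pair $(x_0,x_1)$ and $J$ and whatever he has done with the locked token. To obtain $c=c^*$, Bob must choose $b=a\oplus c^*$, so he must guess $a$. Guessing $a$ from this view is a state-discrimination task between $\rho_0$ and $\rho_1$ with uniform priors. By Helstrom,
\[
\Pr[c=c^*]\le\tfrac12+\tfrac12\dtr(\rho_0,\rho_1)\le\tfrac12+\tfrac12\eps_{\mathrm{hide}}.
\]
Bob could instead force an abort, but aborts never produce output $c^*$. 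Equivalently, in the $p_0+p_1-1\le\eps$ form of the hiding definition, his success probability is the average $\frac12(p_0+p_1)\le\frac12+\frac{\eps_{\mathrm{hide}}}{2}$. Finally, we substitute Theorem~\ref{thm:hiding}.

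\emph{Cheating Alice.} Bob is honest, so $b$ is uniform and independent of everything Alice has produced by the end of the commit phase. Alice's strategy has two parts: a commit-phase action, which fixes Bob's register, and then for each value of $b$ an opening strategy. For $c=c^*$ she needs to open to $a=c^*\oplus b$. Let $p_{a'}$ be the probability that she opens $a'$ successfully when she tries to do so. Her opening behaviour may depend on $b$, but both openings start from the same post-commit state. So $p_0$ and $p_1$ are defined with respect to a single commitment strategy, and the binding bound $p_0+p_1\le1+\eps_{\mathrm{bind}}$ from Theorem~\ref{thm:binding} applies. The operator-norm argument there covers arbitrary adaptive openings from a fixed $\rho_B$. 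Therefore
\[
\Pr[c=c^*]=\tfrac12\bigl(p_{c^*}+p_{1\oplus c^*}\bigr)\le\tfrac12+\tfrac12\eps_{\mathrm{bind}}.
\]

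Combining the two cases gives $\delta_{\mathrm{CF}}\le\frac12\max\{\eps_{\mathrm{hide}},\eps_{\mathrm{bind}}\}$. For the explicit bound, we substitute $\eps_{\mathrm{hide}}\le\eps_{\mathrm{forge}}$ from Theorem~\ref{thm:hiding}. For the binding term, we take the parameter choice $s=\ell_{\min}/4$ stated with Algorithm~\ref{alg:fake}; then $\frac{2s-\ell_{\min}}{2}=-\frac{\ell_{\min}}{4}$, so Theorem~\ref{thm:binding} gives $\eps_{\mathrm{bind}}\le2^{-\ell_{\min}/4}$.

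The main obstacle is the cheating-Alice reduction. We must justify that $p_0$ and $p_1$ from the binding definition refer to one and the same commit-phase state, even when Alice's opening is adaptive in $b$. We would argue this explicitly: Bob's message $b$ is classical and is sent after the commit phase. Alice's operations conditioned on $b$ therefore act only on her private ancilla and on what she reveals, which is exactly the setting of the proof of Theorem~\ref{thm:binding}. A smaller issue arises on Bob's side. Any pre-opening token queries Bob makes are already covered by the event $F$ in the hiding proof, so Helstrom applies to his complete view.
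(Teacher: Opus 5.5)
Your proposal is correct and follows the same route as the paper. You split into a cheating Bob, whose bias is at most $\eps_{\mathrm{hide}}/2$ from the distinguishing bound, and a cheating Alice, where you average $p_0,p_1$ over Bob's uniform $b$ and apply $p_0+p_1\le 1+\eps_{\mathrm{bind}}$; then you take the maximum. Your additions (Helstrom, the single post-commit state underlying adaptive openings, and substituting $s=\ell_{\min}/4$ to obtain $2^{-\ell_{\min}/4}$) only make explicit steps the paper leaves implicit.
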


\begin{proof}
We consider dishonest Bob and dishonest Alice separately.

\emph{Dishonest Bob.} Bob receives the commitment transcript and the commitment register before sending his bit \(b\). By Theorem~\ref{thm:hiding}, his distinguishing advantage between commitments of \(0\) and \(1\) is at most \(\eps_{\mathrm{hide}}\), hence his bias is at most \(\eps_{\mathrm{hide}}/2\).

\emph{Dishonest Alice.} Alice commits before learning Bob's bit \(b\). To force a target outcome \(c^*\) after receiving \(b\), she must successfully open the commitment to the bit
\begin{equation}
a=c^*\oplus b.
\end{equation}
Since honest Bob chooses \(b\) uniformly at random,
\begin{equation}
\frac12 p_0+\frac12 p_1 \le \frac12(1+\eps_{\mathrm{bind}}),
\end{equation}
so her bias is at most \(\eps_{\mathrm{bind}}/2\). Taking the larger of the two cheating biases gives the claim.
\end{proof}

\section{Discussion and future work}

We have presented a hardware-based route to statistically secure bit commitment and coin flipping in a non-relativistic setting. Our construction uses Hybrid-Locked PUFs to benefit from both the availability of classical PUFs and the security advantages of quantum protocols. We then use this hardware token in a setting beyond its usual application, which is typically one where the parties are honest and security is only required against a network adversary.
\\
We design the protocol so that the commitment is tightly tied to the specific challenge-response behaviour of the hardware module, and we exploit the properties of the HLPUF to achieve statistical hiding and binding. The main challenge in designing such quantum protocols is to evade the impossibility of quantum bit commitment while at the same time forcing the commitment state to reflect genuine hardware behaviour. Our protocol addresses this by using the unforgeability property to obtain hiding, and the lock mechanism to enable a verification procedure that yields binding. We believe that both the protocol and the proof techniques developed here open new pathways toward secure two-party communication and computation in quantum networks using intrinsically weak, off-the-shelf PUFs whose security is amplified through protocol design.
\\
An appealing aspect of the protocol is that it allows for a variant in which early measurement is possible. Bob may measure each qubit in a random basis when received, recording each outcome. Then in the commit stage, he only keeps the information for which the bits of $\beta(x_b)$ match the bases he measured in. Although this weakens the security parameter, this allows for a statistically secure protocol in which no quantum memory is required. Naturally, a full analysis on this protocol would be for future work.
\\
In contrast to the bounded or noisy quantum storage models, our construction does not require the assumptions regarding an adversary's storage capabilities. This allows for the protocol to be indefinitely secure, rather than scaling either the number of qubits sent or time taken to run the protocol. The trade off involves the replacement of an adversary's memory with that of the hardness of forging the HLPUF. It remains an open question as to how the tamper proof-ness of the device would be implemented and scaled.
\\
There are several natural next steps. One direction is to design simpler variants of the protocol with a less complex fake-challenge generation procedure. On the proof side, it would be desirable to obtain a fully composable security treatment, since the present analysis is carried out in the stand-alone setting. It would also be interesting to understand how far the same design principles extend to stronger tasks such as string commitment, oblivious transfer, and other mistrustful multiparty functionalities. Finally, experimental implementation of these protocols would be an especially intriguing direction for future work.

\subsection*{Acknowledgments}
The authors would like to thanks Dominik Leichtle and Alexandru Cojocaru for helpful discussions at different stages of this work.  
RD acknowledges the support of the PhD Studentship from the Quantum Computing and Simulation QCS hub, grant number EP/T001062/1. MD acknowledges the support of the UK Engineering and Physical Sciences Research Council, the Integrated Quantum Networks Hub, grant reference EP/Z533208/1, and the Quantum Advantage Pathfinder (QAP), with grant reference EP/X026167/1.

\bibliography{arxiv-version/refs}

\appendix
\section{Complexity Analysis of Algorithm~\ref{alg:fake}}\label{app:alg-complexity}
Here we provide a more detailed formulation of the Theorems~\ref{thm:prob-suc-alg1}, ~\ref{thm:combinatorial-feasibility} and \ref{thm:fake-query}, and their proofs.\\
\begin{theorem}[Success Probability of Step 3 in  Algorithm~\ref{alg:fake}]
\label{thm:prob-suc-alg1-ext}
Let $t$ be the length of $f_2(x)$ and $\ell_{min} \leq t$ be the number of qubits in the register $\rho_b$. Let $\delta_r$ be the bias of the HLPUF. The probability that the algorithm succeeds in one run is:
\begin{equation}
    \Pr[success]=\sum_{y=0}^{\ell_{min}}\left(\frac{1}{2}\right)^{\ell_{min}}\binom{\ell_{min}}{y}\sum_{k=y}^{t-(\ell_{min}-y)}\binom{t}{k}\left(\frac{1}{2}+\delta_r\right)^k\left(\frac{1}{2}-\delta_r\right)^{t-k}
\end{equation}
\begin{proof}
    The algorithm succeeds whenever $f_2(x_0)$ contains at least as many $0s$ as $r$ and at least as many $1s$ as $r$. This occurs when the properties $\wt(f_2(x))\geq\wt(r)\;\;\wedge\;\;t-\wt(f_2(x))\geq l_{min}-\wt(r)$ are satisfied. The distribution of the string $f_2(x_0)$ can be modelled by the binomial distribution
    \begin{equation}
        X:=\text{wt}(f_2(x_0))\sim\text{Bin}(t,p)\;\;,\;\;p=\frac{1}{2}+\delta_r
    \end{equation}
    with expectation
    \begin{equation}
        \mu_x:=\mathbb{E}[X]=tp
    \end{equation}
    while the distribution of the string $r$ can be modelled by the binomial distribution
    \begin{equation}
        Y:=\text{wt}(r)\sim\text{Bin}(\ell_{min},\frac{1}{2})
    \end{equation}
    with expectation
    \begin{equation}
        \mu_y:=\mathbb{E}[Y]=\frac{\ell_{min}}{2}=\frac{\alpha t}{2}
    \end{equation}
    Conditioning on the event that $Y=y$, success occurs if $y\leq X\leq t-(\ell_{min}-y)$, therefore
    \begin{equation}
        \Pr[success|Y=y]=\sum_{k=y}^{t-(\ell_{min}-y)}\binom{t}{k}\left(p\right)^k\left(1-p\right)^{t-k}
    \end{equation}
    By the law of total probability
    \begin{equation}
        \Pr[success]=\sum_{y=0}^{\ell_{min}}\Pr[success|Y=y]\Pr[Y=y]
    \end{equation}
    Therefore:
    \begin{equation}
    \begin{split}
    &\Pr[success]=\Pr[\wt(f_2(x))\geq\wt(r)\;\;\wedge\;\;t-\wt(f_2(x))\geq l_{min}-\wt(r)]=\\
    &\sum_{y=0}^{\ell_{min}}\left(\frac{1}{2}\right)^{\ell_{min}}\binom{\ell_{min}}{y}\sum_{k=y}^{t-(\ell_{min}-y)}\binom{t}{k}\left(\frac{1}{2}+\delta_r\right)^k\left(\frac{1}{2}-\delta_r\right)^{t-k}
    \end{split}
    \end{equation}
\end{proof}
\end{theorem}

\begin{theorem}[High-probability combinatorial feasibility of Algorithm~\ref{alg:fake}]
\label{thm:combinatorial-feasibility-ext}
Let $x_0 \xleftarrow{\$} \{0,1\}^{n}$, and suppose that the bits of
$f_2(x_0)\in\{0,1\}^{t}$ are independently and identically distributed with a bias at most $\delta_r$. Let $r\xleftarrow{\$}\{0,1\}^{\ell_{\min}}$ be sampled independently,
where
\begin{equation}
    \ell_{\min}=\alpha t,\qquad 0<\alpha<1,
\end{equation}
and assume that \(\ell_{\min}\) is an integer. If
\begin{equation}
    \frac{\alpha}{2}<p<1-\frac{\alpha}{2},
\end{equation}
then the combinatorial feasibility condition in Step~3 of Algorithm~1,
namely
\begin{equation}
    \operatorname{wt}(f_2(x_0))\geq \operatorname{wt}(r)\quad\text{and}\quad t-\operatorname{wt}(f_2(x_0))\geq\ell_{\min}-\operatorname{wt}(r),
\end{equation}
is satisfied with probability
\begin{equation}
    \Pr[\mathrm{feasible}]
    =
    1-e^{-\Omega(t)}.
\end{equation}
Equivalently, except with probability \(e^{-\Omega(t)}\), Algorithm~1
can select \(\ell_{\min}\) distinct indices
\(j_1,\ldots,j_{\ell_{\min}}\in[t]\) such that
\begin{equation}
    f_2(x_0)_{j_i}=r_i\qquad
    \text{for every }i\in[\ell_{\min}].
\end{equation}
The constant implicit in the \(\Omega(t)\) notation may depend on
\(\alpha\) and \(p\), but not on \(t\).
\begin{proof}
Let
\begin{equation}
    X:=\operatorname{wt}(f_2(x_0))
    \sim \operatorname{Bin}(t,p),
\end{equation}
with expectation
\begin{equation}
    \mu_x:=\mathbb{E}[X]=tp,
\end{equation}
and let
\begin{equation}
    Y:=\operatorname{wt}(r)\sim \operatorname{Bin}\left(\ell_{\min},\frac{1}{2}\right),
\end{equation}
with expectation
\begin{equation}
    \mu_y:=\mathbb{E}[Y]=\frac{\ell_{\min}}{2}=\frac{\alpha t}{2}.
\end{equation}
Since $x_0$ and $r$ are sampled independently, the random variables $X$ and $Y$ are independent.\\
The combinatorial feasibility condition in Step~3 is satisfied precisely when
\begin{equation}
    X\geq Y\quad\text{and}\quad 
    t-X\geq \ell_{\min}-Y,
\end{equation}
or equivalently,
\begin{equation}
    Y\leq X\leq t-\ell_{\min}+Y.
\end{equation}
Since
\begin{equation}
    \frac{\alpha}{2}<p<1-\frac{\alpha}{2},
\end{equation}
we may choose a constant $\varepsilon_y\in(0,1)$ sufficiently small such that
\begin{equation}
    \frac{\alpha}{2}(1+\varepsilon_y)<p<1-\frac{\alpha}{2}(1+\varepsilon_y).
\end{equation}
Define
\begin{equation}
    y_{\max}:=(1+\varepsilon_y)\mu_y,\qquad
    y_{\min}:=(1-\varepsilon_y)\mu_y.
\end{equation}
Applying multiplicative Chernoff bounds to $Y$ gives
\begin{equation}
    \Pr\!\left[Y\geq (1+\varepsilon_y)\mu_y\right]\leq\exp\left(-\frac{\varepsilon_y^2\mu_y}{2+\varepsilon_y}\right)
\end{equation}
and
\begin{equation}
    \Pr\!\left[Y\leq (1-\varepsilon_y)\mu_y\right]\leq\exp\left(-\frac{\varepsilon_y^2\mu_y}{2}\right).
\end{equation}
Since $\mu_y=\alpha t/2$, it follows that the event
\begin{equation}
    \mathcal{E}_Y:=\left\{y_{\min}\leq Y\leq y_{\max}\right\}
\end{equation}
occurs with probability
\begin{equation}
    \Pr[\mathcal{E}_Y]=1-e^{-\Omega(t)}.
\end{equation}
Conditioned on $\mathcal{E}_Y$, it is sufficient that
\begin{equation}
    y_{\max}\leq X\leq t-\ell_{\min}+y_{\min},
\end{equation}
because then
\begin{equation}
    Y\leq y_{\max}\leq X
\end{equation}
and
\begin{equation}
    X\leq t-\ell_{\min}+y_{\min}\leq t-\ell_{\min}+Y.
\end{equation}
Define the event
\begin{equation}
    \mathcal{E}_X:=\left\{y_{\max}\leq X\leq t-\ell_{\min}+y_{\min}\right\}.
\end{equation}
By the union bound,
\begin{equation}
    \Pr[\mathcal{E}_X]\geq1-\Pr[X<y_{\max}]-\Pr[X>t-\ell_{\min}+y_{\min}].
\end{equation}
By the choice of $\varepsilon_y$,
\begin{equation}
    y_{\max}=\frac{\alpha t}{2}(1+\varepsilon_y)<tp=\mu_x.
\end{equation}
Therefore, the lower-tail Chernoff bound gives
\begin{equation}
    \Pr[X<y_{\max}]\leq\exp\left[-\frac{\left(1-\frac{y_{\max}}{\mu_x}\right)^2\mu_x}{2}\right].
\end{equation}
Substituting $\mu_x=tp$ and
$y_{\max}=\alpha t(1+\varepsilon_y)/2$, we obtain
\begin{equation}
    \Pr[X<y_{\max}]\leq\exp\left[-\frac{t\left(p-\frac{\alpha}{2}(1+\varepsilon_y)\right)^2}{2p}\right].
\end{equation}
Similarly,
\begin{equation}
    t-\ell_{\min}+y_{\min}=t\left(1-\frac{\alpha}{2}(1+\varepsilon_y)\right)>tp=\mu_x.
\end{equation}
Using the upper-tail Chernoff bound
\begin{equation}
    \Pr[X>(1+\delta)\mu_x]\leq\exp\left(-\frac{\delta^2\mu_x}{2+\delta}\right),
\end{equation}
with
\begin{equation}
    \delta=\frac{t-\ell_{\min}+y_{\min}}{\mu_x}-1,
\end{equation}
gives
\begin{equation}
    \Pr[X>t-\ell_{\min}+y_{\min}]\leq\exp\left[-\frac{\left(t-\ell_{\min}+y_{\min}-\mu_x\right)^2}{\mu_x+t-\ell_{\min}+y_{\min}}\right].
\end{equation}
Substituting the corresponding values yields
\begin{equation}
    \Pr[X>t-\ell_{\min}+y_{\min}]\leq\exp\left[-\frac{t\left(1-p-\frac{\alpha}{2}(1+\varepsilon_y)\right)^2}{1+p-\frac{\alpha}{2}(1+\varepsilon_y)}\right].
\end{equation}
Consequently,
\begin{equation}
\begin{aligned}
    \Pr[\mathcal{E}_X]\geq\;&1-\exp\left[-\frac{t\left(p-\frac{\alpha}{2}(1+\varepsilon_y)\right)^2}{2p}\right]\\
    &-\exp\left[-\frac{t\left(1-p-\frac{\alpha}{2}(1+\varepsilon_y)\right)^2}{1+p-\frac{\alpha}{2}(1+\varepsilon_y)}\right].
\end{aligned}
\end{equation}
Both exponents are linear in $t$ with strictly positive coefficients. Hence,
\begin{equation}
    \Pr[\mathcal{E}_X]=1-e^{-\Omega(t)}.
\end{equation}
Finally, the event $\mathcal{E}_X\cap\mathcal{E}_Y$ implies
\begin{equation}
    Y\leq X\leq t-\ell_{\min}+Y.
\end{equation}
Since $\mathcal{E}_X$ depends only on $X$, while $\mathcal{E}_Y$ depends only on $Y$, independence gives
\begin{equation}
\begin{aligned}
    \Pr\!\left[Y\leq X\leq t-\ell_{\min}+Y\right]
    &\geq\Pr[\mathcal{E}_X\cap\mathcal{E}_Y]\\
    &=\Pr[\mathcal{E}_X]\Pr[\mathcal{E}_Y]\\
    &=\left(1-e^{-\Omega(t)}\right)\left(1-e^{-\Omega(t)}\right)\\
    &=1-e^{-\Omega(t)}.
\end{aligned}
\end{equation}
This proves the claimed high-probability combinatorial feasibility.
\end{proof}
\end{theorem}
\begin{theorem}[HLPUF query complexity of Algorithm~\ref{alg:fake}]
\label{thm:fake-query-ext}
Assume the setting of Theorem~\ref{thm:prob-suc-alg1-ext}. Let $Q$ denote the total number of unlocked-HLPUF queries made by Algorithm~\ref{alg:fake} on input $x_0$, counting the initial query to obtain $f(x_0)$. Conditioned on the combinatorial feasibility test being satisfied, then
\begin{equation}
    \mathbb{E}[Q]
    \le
    1+\frac{1}{p_{\mathrm{sep}}},
\end{equation}
where
\begin{equation}
    p_{\mathrm{sep}}
    \ge
    1-\left(1-\left(\frac12-\delta_r\right)^2\right)^{s/2}.
\end{equation}
Equivalently,
\begin{equation}
    \mathbb{E}[Q]
    \le
    1+\frac{1}{
    1-\left(1-\left(\frac12-\delta_r\right)^2\right)^{s/2}
    }.
\end{equation}
In particular, for every fixed $\delta_r<\frac{1}{2}$,
\begin{equation}
    \mathbb{E}[Q]=2+e^{-\Omega(s)}.
\end{equation}
\end{theorem}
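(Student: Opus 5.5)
The first query, on $x_0$, is made once and for all. After the feasibility test passes, the \texttt{repeat} loop makes exactly one unlocked-HLPUF query per iteration, namely the query on the candidate $x_1$. So $Q = 1 + N$, where $N$ is the number of loop iterations. The plan is to show that $N$ is geometric, with per-iteration success probability at least $p_{\mathrm{sep}}$. Then $\mathbb{E}[N]\le 1/p_{\mathrm{sep}}$, which gives $\mathbb{E}[Q]\le 1+1/p_{\mathrm{sep}}$.

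\textbf{Independence across iterations.} Each iteration draws a fresh ordered list $J$, producing a new candidate $x_1\neq x_0$. Under Assumption~\ref{ass:HLPUF}, the bits of $f_1(x_1)$ are i.i.d.\ with bias at most $\delta_r$, independently of $f_1(x_0)$. I will treat successive candidates as independent draws. If a candidate repeats, the success event is identical; I would either argue that the probability of repeating is negligible, or note that sampling over distinct challenges only helps. Conditioned on $f_1(x_0)$, each iteration therefore succeeds independently with probability at least $p_{\mathrm{sep}}$, and the geometric bound follows.

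\textbf{Lower bound on $p_{\mathrm{sep}}$.} Write $a=f_1(x_0)$ and $a'=f_1(x_1)$, each made of $k=s/2$ pairs $(\theta_j,u_j)$. The verifier states are products of single-qubit BB84 states, so
\[
\abs{\mbraket{\psi^{a}}{\psi^{a'}}}^2=\prod_{j=1}^{s/2} c_j ,
\]
where $c_j=1$ if the pairs coincide, $c_j=0$ if the bases agree but the values differ, and $c_j=\tfrac12$ if the bases differ. A single orthogonal coordinate makes the product $0\le 2^{-s/2}$, so the test passes. The test therefore fails only if no coordinate is orthogonal, and
\[
1-p_{\mathrm{sep}}\le \prod_{j=1}^{s/2}\Pr[\text{coordinate $j$ not orthogonal}].
\]
Coordinate $j$ is orthogonal when $\theta'_j=\theta_j$ and $u'_j\neq u_j$. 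Both events involve fresh biased bits, and each has probability at least $\tfrac12-\delta_r$. Hence each coordinate is orthogonal with probability at least $(\tfrac12-\delta_r)^2$, which gives $p_{\mathrm{sep}}\ge 1-(1-(\tfrac12-\delta_r)^2)^{s/2}$. This is a lossy bound, but it is exactly the one stated. A tighter bound would also count the product of the $\tfrac12$ factors, but it is not needed.

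\textbf{Asymptotics.} Set $q:=1-(\tfrac12-\delta_r)^2$. For fixed $\delta_r<\tfrac12$ we have $q<1$, so $q^{s/2}=e^{-\Omega(s)}$ and $1/(1-q^{s/2})=1+O(q^{s/2})=1+e^{-\Omega(s)}$. Therefore $\mathbb{E}[Q]=2+e^{-\Omega(s)}$.

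\textbf{Main obstacle.} The delicate step is justifying independence. Three things need care: $f_1(x_1)$ must be independent of $f_1(x_0)$ even though $x_1$ is a deterministic function of $x_0$ and $J$; the iterations must be independent of one another; and the bias bound must hold conditionally. I would handle this by invoking the i.i.d.\ and bias clause of Assumption~\ref{ass:HLPUF}, which holds for every distribution on challenges, and applying it conditionally on the history.
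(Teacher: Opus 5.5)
Your proposal follows the paper's proof essentially step for step: $Q=1+N$ with $N$ bounded by a geometric variable, a single orthogonal verifier coordinate (probability at least $(\tfrac12-\delta_r)^2$, independently across the $s/2$ coordinates) already suffices to pass the separation test, and the same asymptotic step. The independence across iterations that you flag as the main obstacle is not derived in the paper either; it is posited as a ``sampling assumption'' that each tested candidate succeeds independently with probability at least $p_{\mathrm{sep}}$. Your aside that distinct sampling ``only helps'' does not cover the actual with-replacement loop, because a repeated candidate that already failed fails with certainty, so the clean geometric bound rests on that idealization rather than on Assumption~\ref{ass:HLPUF} alone.
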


\begin{proof}
Algorithm~\ref{alg:fake} always makes one unlocked-HLPUF query on the original challenge $x_0$ in order to obtain
\begin{equation}
    f(x_0)=f_1(x_0)\,\|\,f_2(x_0).
\end{equation}
After that, the balancing filter uses only classical computation on $x_0$, $f_2(x_0)$, and the public map $\beta$, so it requires no additional HLPUF queries. Every additional unlocked-HLPUF query is used only to test whether a candidate challenge $x$ satisfies the verifier-separation condition.

Write $s=2k$, so the verifier state has $k=\frac{s}{2}$ BB84 qubits. Fix one verifier qubit of the original challenge, determined by two bits of $f_1(x_0)$. For a tested candidate challenge $x$, the corresponding verifier qubit is orthogonal to the original one whenever the basis bit matches and the value bit is opposite. By the independent bitwise bias assumption, the probability of this orthogonal event is at least
\begin{equation}
    q_{\perp}\ge \left(\frac12-\delta_r\right)^2.
\end{equation}
If this orthogonal event occurs on at least one verifier qubit, then the total verifier overlap is zero, and in particular
\begin{equation}
    \abs{\mbraket{\verst{f_1(x_0)}}{\verst{f_1(x)}}}^2\le 2^{-s/2}.
\end{equation}
Therefore a sufficient condition for a candidate challenge to pass the verifier-separation test is that at least one of the $k=\frac{s}{2}$ verifier qubits be orthogonal. Since the qubits are independent,
\begin{equation}
    p_{\mathrm{sep}}\ge1-(1-q_{\perp})^{k}\ge1-\left(1-\left(\frac12-\delta_r\right)^2\right)^{s/2}.
\end{equation}
Under the sampling assumption, each tested candidate succeeds independently with probability at least $p_{\mathrm{sep}}$. Thus the number $N$ of additional candidate tests before the first successful one is stochastically dominated by a geometric random variable with success probability $p_{\mathrm{sep}}$, and
\begin{equation}
    \mathbb{E}[N]\le \frac{1}{p_{\mathrm{sep}}}.
\end{equation}
Since $Q=1+N$, we obtain
\begin{equation}
    \mathbb{E}[Q]\le 1+\frac{1}{p_{\mathrm{sep}}}.
\end{equation}
Finally, for every fixed \(\delta_r<1/2\), the quantity
\begin{equation}
    1-\left(\frac12-\delta_r\right)^2
\end{equation}
is a constant strictly smaller than $1$. Hence
\begin{equation}
    1-p_{\mathrm{sep}}\le\left(1-\left(\frac12-\delta_r\right)^2\right)^{s/2}=e^{-\Omega(s)},
\end{equation}
which implies
\begin{equation}
    \mathbb{E}[Q]=1+\frac{1}{1-e^{-\Omega(s)}}=2+e^{-\Omega(s)}.
\end{equation}
\end{proof}
\end{document}